\newcommand\reallywidehat[1]{%
\savestack{\tmpbox}{\stretchto{%
  \scaleto{%
    \scalerel*[\widthof{\ensuremath{#1}}]{\kern-.6pt\bigwedge\kern-.6pt}%
    {\rule[-\textheight/2]{1ex}{\textheight}}
  }{\textheight}%
}{0.5ex}}%
\stackon[1pt]{#1}{\tmpbox}%
}
\newtheorem{theorem}{Theorem}
\newcommand{\para}[1]{\smallskip\noindent {\bf #1}~}
\newcommand{\ie}{{\em i.e.}}
\newlength{\figurewidthA} 
\newlength{\figurewidthB} 
\newlength{\figurewidthC} 
\newlength{\figurewidthD} 
\newlength{\figurewidthE} 
\title{A Unified Framework for Information Consumption Based on Markov Chains}
\author[1]{David Shui Wing Hui}
\author[1]{Yi-Chao Chen}
\author[1]{Gong Zhang}
\author[1]{Weijie Wu}
\author[2]{\par Guanrong Chen}
\author[3]{John C. S. Lui}
\author[1]{Yingtao Li}
\affil[1]{Huawei Technologies Co. Ltd., China}
\affil[2]{City University of Hong Kong, Hong Kong SAR, China}
\affil[3]{The Chinese University of Hong Kong, Hong Kong SAR, China}
\date{November 25, 2015}
\begin{document}
\maketitle

\begin{abstract}
This paper establishes a Markov chain model as a unified framework for understanding information consumption processes in complex networks, with clear implications to the Internet and big-data technologies. In particular, the proposed model is the first one to address the formation mechanism of the "trichotomy" in observed probability density functions from empirical data of various social and technical networks. Both simulation and experimental results demonstrate a good match of the proposed model with real datasets, showing its superiority
over the classical power-law models.
\end{abstract} 

\section{Introduction} \label{sec:introduction}

Many complex network models have been proposed to provide an
essential macroscopic understanding of various complex real-world networks, such as the Internet and WWW \cite{barabasi:science:random},
metabolic networks \cite{jeong:nature:metabolic}, the ecosystem
\cite{martin:pnas:powerlaw}, as well as citation
\cite{price:jasis:citation} and co-authorship \cite{martin:pre:citation} networks. One important feature of such networks lies in the node-degree distribution. There are two major classes
of node-degree distributions in complex networks. One is the Poisson distribution or exponential distribution (mainly for homogeneous networks, with rapidly decaying tails in the distributions). The other is the power-law distribution (mainly for heterogeneous networks, well known for their scale-free properties, with long tails in the distributions).
Existing models typically account for the occurrence of Poisson \cite{ErdoRenyi:pm:rg} or exponential distribution \cite{callaway:pre:rgg} by the random attachment mechanism \cite{liu:pla:pa}
during the network formation process, while the occurrence of
power-law distribution comes from the preferential attachment
mechanism \cite{liu:pla:pa, yule:ptrsb:math}. Although both mechanisms are essential in network formation and can capture many real-world phenomena to a certain extent, typically each of them works only within a particular range of the broad degree-distribution spectrum. In this work (see Section~\ref{sec:exp}), nine real datasets are analyzed, ranging from citation networks and social networks, to vehicular networks, where all of them exhibit the "trichotomy" phenomenon --- a power-law distribution in the middle range of the distribution with exponential-alike distributions in both the head and the tail regions. In other words, these two typical distributions, although fundamental, are not sufficient to individually provide an accurate fit to the entire range from degree distribution for many real datasets.

Due to the discrepancy in empirical data and in the above-discussed two basic distributions for describing real datasets, several extensions had been suggested based upon the classical preference attachment mechanism, by introducing a fitness value (a weighting factor)
to the preference, called the fitness model
\cite{bianconi:prl:complexnetworks}, or including nonlinear
dependence (on the node-degrees \cite{krapivsky:prl:random} or
node fitness \cite{bianconi:epl:evolving_networks}), or decaying
fitness accounting for temporal effects of the fitness value
\cite{medo:prl:powerlaw_decay}, as well as employing a rewiring mechanism
\cite{albert:prl:evolving_networks} during the network formation
process. Extending the original Barab\'asi-Albert (BA) model \cite{barabasi:science:random},
these variants aim at explaining several extra features in the
formation process of a network, e.g. reflecting the heterogeneous
properties from nodes
\cite{bianconi:prl:complexnetworks, medo:prl:powerlaw_decay}
(using a higher node-specific fitness value to represent a higher attractiveness of the node in 
\cite{bianconi:prl:complexnetworks}), investigating the time-dependence of the connection mechanism from different
perspectives (incorporating the nonlinear dependence in
\cite{krapivsky:prl:random}, or introducing a decaying factor
into the fitness model in \cite{medo:prl:powerlaw_decay}), or
finding a better rewiring mechanism in fitting the power-law 
exponent (i.e. the slope of the degree distribution curve
in log-log scale) in \cite{albert:prl:evolving_networks}.

Despite the fact that the aforementioned models have improved the fitting to empirical data, in many networks (particularly evolving networks), there is a clear occurrence of phase transition points in the degree distribution, which has not been investigated previously in the literature. In particular, a unified model of complex-network
generation mechanisms has not been established to combine both the basic
BA model (having a power-law distribution) \cite{barabasi:science:random}, and the Erd\"{o}s-R\'{e}nyi
(ER) model (having a Poisson distribution) \cite{ErdoRenyi:pm:rg} or its variant on random graphs (having exponential distributions) \cite{callaway:pre:rgg}, with an ability
of explaining the commonly-observed trichotomy in the degree
distribution curves of the empirical data.

This paper proposes a new framework (as an alternative to the various fitness models) for complex network formation processes, which can provide better matching with real datasets. In particular, this paper examines complex networks driven by the next-generation
Internet development, with emphasis on how the information contents in such networks are being processed (abbreviated as \textit{information consumption} processing)
such as in the vehicle-to-vehicle (V2V) network, online social networks
(e.g., Facebook, Twitter friendship networks), and citation networks (or coauthorship networks). The proposed model by nature suggests a clean-slate
approach based on Markov chains, which is proved in this paper to be more powerful and transparent to explain the trichotomy feature in degree distributions arising from many real-world network datasets.

\section{The Proposed Model} \label{sec:model}

In this section, a unified model is proposed based on Markov chains, referred to as the {\it MC model} hereafter, for representing the information consumption process in a complex network.

\subsection{The MC model} \label{subsec-model}

\subsubsection{Model description} \label{subsubsec:model-Description}

The information consumption process is modelled as an evolving network. Every information consumption attempt is represented by a node. The network starts with a few connected nodes and then the network evolves as follows:

\para{Big-Bang (Start):} A node comes to the network randomly at a certain rate; upon arrival, it creates one (or several) connection(s) to existing nodes, like a small big-bang in the network.

In the network, existing nodes have different degrees. As the network evolves, their degrees can change, particularly the evolvement of a particular existing node is divided into three phases --- initializing phase, fast-evolving phase and maturing phase.

\para{Initializing: }
All nodes in this phase are being attached by the new node presumably with an \rm{equal probability}. Hence, the evolution of information consumption is independent of the existing network structure and behavioural pattern. In other words, there is no preferential attachment mechanism in this stage.

\para{Fast-evolving: }
Among existing nodes in this phase, when a new node comes, it prefers connecting to an existing node of higher degree with a higher probability.
Specifically, the connecting probability is proportional to the degree of the existing node. As a result, from the perspectives of those existing nodes with higher degrees (which are usually those nodes in the network with longer residential time), new nodes join the network at a faster rate than in the initializing phase.
In this phase, the evolution of information consumption depends on the existing network structure.
This preferential attachment mechanism enables some nodes to become more and more popular as the network evolves.

\para{Maturing (Saturating): }
For any existing node in this phase, once they reach sufficiently large degree, the attractiveness of these nodes for being attached begins to saturate due to physical, economical or technological (especially computational complexity, in particular due to the practical bound on the implementation complexity of the referential attachment mechanism) constraints. These nodes with degrees exceeding a certain threshold are called super nodes. Although new nodes still preferentially join the super nodes but they would join these nodes with the same probability. There are two reasons. One is that existing super nodes could start to refuse new connections due to physical or economical constraints. The other is that although there are still some differences in the degrees among super nodes, new attempts will not care or cannot distinguish the exact degrees of those super nodes to perform an exact preferential attachment. This is due to the practical bound of the implementation complexity, so that the degree counting of these super nodes will stop after reaching a threshold, and these super nodes having degrees beyond this threshold are indistinguishable to new comers. In both cases, a new attempt will attach to the super nodes with (approximately) the same probability. This means that in the maturing phase, the evolution of the information consumption pattern is still dependent on the existing information consumption pattern, but the dependence is becoming weaker as the network is getting larger (and hence, more severely saturated). This bounded preferential attachment mechanism, in practice, describes the realistic situation where it is no longer meaningful for a new attempt to connect to super nodes with different preferences because they are similar from the viewpoint of the new comer.

\subsubsection{Model formulation}\label{subsubsec:model-Math}

To formulate the network model, the simplest possible initial
network is a chain with only two nodes connected by one edge. Starting from this network, each time a new node arrives at rate $\lambda_0$. After that, a new node would connect to an existing one with a probability proportional to its degree subjected to bounds, and hence existing nodes in the network evolve according to the aforementioned three developing phases, in the following manner. 

The new node will have the same connection probability to all existing nodes in the initializing phase. It will then have an increasing connection probability to a node with a higher degree (i.e. preferential attachment) among all nodes in the fast-evolving phase. Finally it will have a bounded preferential attachment probability to nodes in the maturing phase. 

This formation process is mathematically modeled as follows: when an existing node is in the initializing phase, its degree is less than or equal to a lower boundary value ${\mathcal L}$, and when it is in the maturing phase, its degree is larger than an upper boundary value ${\mathcal U}$. Suppose nodes are labeled in ascending order according to their arrival time. When the network size (number of nodes) is $n-1$, and the $n$-th node arrives, it will connect to the $i$-th existing node for $i \in \left\{1, \cdots n - 1 \right\}$ in the network, with an attachment probability proportional to a constant ${\hat k}_i^{(n-1)}$ associated with $i$, referred to as the modified degree. This ${\hat k}_i^{(n-1)}$ is the same as the node degree $k_i^{(n-1)}$ during the fast-evolving phase but it has a lower bound $L$ ($L\le{\mathcal L}$) in the initializing phase and an upper bound $U$ (${\mathcal U}\le U$) in the maturing phase. Denote $\widehat{k_i^{(n)}}$ by $\widehat{k_i}$ and ${k_i^{(n)}}$ by $k_i$, whenever the corresponding network size $n$ is specified. Suppose that an existing node is to be attached starting from degree $k^0$ (for most cases in this study, $k^0 = 1$ or $k^0 = 0$). Then, the modified degree is mathematically defined as follows (with $\widehat{k^0 - 1} \triangleq 0$): 
\begin{equation} \label{def:modified-deg}
{\hat k}_i =
\begin{cases}

L, & {{\text{if }} k^0 \leq {k_i} \leq \mathcal{L}} \\
k_i, & {{\text{if }}\mathcal{L} < {k_i} \leq \mathcal{U}} \\
U, & {{\text{if }} \mathcal{U} < {k_i} \leq N_{\mathcal{T}} },
\end{cases}
\end{equation}
where $N_{\mathcal{T}}$ is the number of nodes in the system at the current observation time $\mathcal{T}$.

From the perspective of a specific node (denoted as node * hereafter),
its degree $k_{*}$ and the corresponding network size $n$ together form a two-dimensional (continuous-time) Markov chain, with its state
transition rate diagram depicted by Fig.~\ref{fig:model-2DMC} (with normalized rate $\lambda \triangleq {\lambda_0}/{L}$).

\begin{figure}[htbp!]
	\centering
  	\resizebox{\textwidth}{!}{%
  	\begin{tikzpicture}[->, >=stealth', auto, semithick, node distance=5cm]
\tikzstyle{every state}=[fill=white,draw=black,thick,text=black,scale=1, minimum width={width("k-1, n-1")+70pt}]
\node[state]    (A)                     {\LARGE $k, n-1$};
\node[state]    (B)[below of=A]   {\LARGE $k-1, n-1$};
\node[state]    (C)[above left of=B, draw=white]   {\LARGE $\dots$};
\node[state]    (D)[right of=A, xshift=1.5cm]   {\LARGE $k, n$};
\node[state]    (E)[below right of=D, draw=white]   {\LARGE $\dots$};
\node[state]    (F)[right of=D, xshift=3cm]   {\LARGE $k, n+1$};
\node[state]    (G)[above of=F]   {\LARGE $k+1, n+1$};
\path
(A) edge[bend left,above]      node{\LARGE $\lambda \sum\limits_{i \neq *} \widehat{k}_i^{(n-1)}$}      (D)
(B) edge[right]     node{\LARGE $\lambda \left(\widehat{k - 1} \right)$}           (D)
(D) edge[bend left,below]      node{\LARGE $\lambda \sum\limits_{i \neq *} \widehat{k}_i^{(n)}$}      (F)
(D) edge[right]     node{\LARGE $\lambda \left(\widehat{k} \right)$}           (G);

\draw[red] ($(A)+(0,-2)$) ellipse (3cm and 6cm)node[yshift=5.5cm]{\LARGE When network size is $n - 1$};
\draw[blue] ($(D)+(5,0)$) ellipse (15cm and 2cm)node[xshift=9.5cm]{\LARGE When degree of Node * is $k$};

\end{tikzpicture}
}
        \vspace{-15pt}
  		\caption{The state transition rate diagram of a two-dimensional Markov chain for a specific node * --- where $k$ denotes its degree and $n$ denotes the size of network that this node * is in.}
  		\label{fig:model-2DMC}
  		\vspace{-15pt}
	\end{figure}
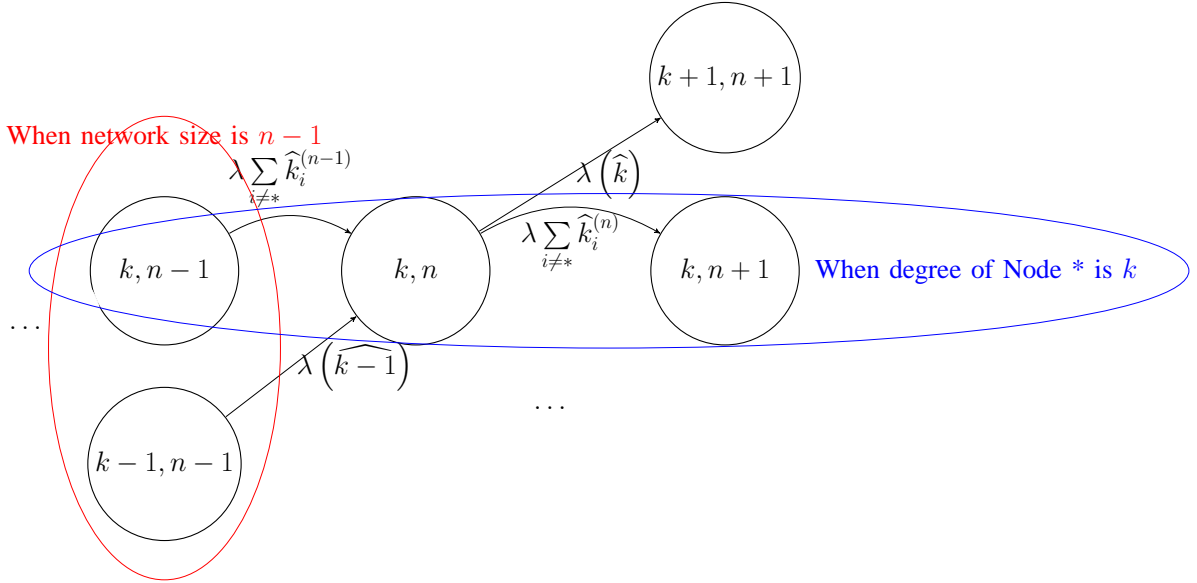


Now, the MC model is further described and discussed.


In Fig. \ref{fig:model-2DMC},
a circle represents a state of the state variable. Suppose that node * in
consideration has degree $k-1$ and the current network size is $n-1$. Then, according to the bounded preferential attachment mechanism, a newly arrived node connects to node * at rate $\lambda$. But, it is also possible for this new node to connect to other nodes. Since the new node is likewise preferentially attached to other nodes in proportion to their degrees, the total state transition rate of connecting to other nodes is given by $\lambda \sum_{i \ne *} {{{\hat k}_i}}$. In the following, consider that it connects to node $i'$.

Denote the number of nodes in the network at time $t$ by $N(t)$ and denote the degree of the specified node * at time $t$ by $K(t)$ as two random processes. Then, the joint probability mass (density) function of
this node degree and the network size at any particular time $t$ is ${p_{k,n}}(t) = \Pr \left\{ {K(t) = k,N(t) = n}\right\}$.
Also denote the sum of the modified degrees of all nodes in the network by $S_n$. Then, when the network has size
$n-1$, one has ${S_{n - 1}} = \sum\limits_{i = 1}^{n - 1}
{{{\hat k}_i^{(n-1)}}}$, and thus ${S_n} = \sum\limits_{i =1 \hfill\atop i
\ne i'\hfill}^{n - 1} {{{\hat k}_i^{(n-1)}}}  + \left(\widehat{{k_{i'}^{(n-1)}} + 1}\right)
+ {\hat k_n^{(n)}}$.

As can be verified by examining the state transition rate diagram in Fig. \ref{fig:model-2DMC}, the dynamics of the probability mass function of the state variable --- the degree $k$ of a specific node * (for $k \in \left\{1, \cdots N_\mathcal{T} \right\}$) and the network of size $n$ (for $n \in \left\{1, \cdots N_\mathcal{T} \right\}$) satisfy the following equation:

\begin{eqnarray}
\frac{d}{{dt}}{p_{k,n}}(t)
&=& \lambda \left( {\widehat {k - 1}}
\right){p_{k - 1,n - 1}}(t) + \lambda \left( {\sum_{i \ne *}
{{{\hat k}_i}} } \right){p_{k,n - 1}}(t)
 - \lambda {S_n}{p_{k,n}}(t) \nonumber\\
&=& \lambda \left( {\widehat {k - 1}} \right){p_{k - 1,n - 1}}(t)
 + \lambda \left( {{S_{n - 1}} - \hat k} \right){p_{k,n - 1}}(t)
 - \lambda {S_n}{p_{k,n}}(t), \label{eqn:DE-on-2Dpmf}
\end{eqnarray}
with the boundary conditions $p_{k, n_* + k - 2}(t) = 0$ for all $k \ge 1$ (assuming that when node * arrives, the network contains $n_*$ of nodes) and $p_{k, N_\mathcal{T}+d}(t) = 0$ for all $k \ge 1$ and $d \ge 1$.

Summing equation (\ref{eqn:DE-on-2Dpmf}) over all the possible network size $n$ from $2$ to $N_{\mathcal{T}}$,  one could obtain the dynamics of the degree $k$ (for $k \in \left\{1, \cdots N_\mathcal{T} \right\}$) of the specified node * as follows:
\begin{equation}
\frac{d}{{dt}}{p_{k}}(t) = \lambda \left( {\widehat {k - 1}} \right){p_{k - 1}}(t)
- \lambda \left( {\hat k} \right){p_{k}}(t) + \lambda \left(\hat{k}\right) {p_{k, N_{\mathcal{T}}}}(t) \label{eqn:DE-on-pmf-implicit-w-BdyEffect}
\end{equation}
for simplicity of analysis, one could remove the small boundary term $\lambda \hat{k} {p_{k, N_{\mathcal{T}}}}(t)$ (which vanishes as $\mathcal{T} \to \infty$), as a result, the degree dynamics is simplified as follows:
\begin{equation}
\frac{d}{{dt}}{p_{k}}(t) = \lambda \left( {\widehat {k - 1}} \right){p_{k - 1}}(t)
- \lambda \left( {\widehat k} \right){p_{k}}(t), \quad \forall k \ge k^0
\label{eqn:DE-on-pmf-implicit}
\end{equation}
where $k^0$ is the starting degree, which is either 0 or 1,

Write it in a more explicit manner, one has
\begin{equation} \label{eqn:DE-on-pmf}
\frac{d}{{dt}}{p_k}(t) = \left\{ {\begin{array}{*{20}{c}}
{ - \lambda L{p_k}(t)},\\
{\lambda L{p_{k - 1}}(t) - \lambda L{p_k}(t)},\\
{\lambda \left( L \right){p_{k - 1}}(t) - \lambda k{p_k}(t)},\\
{\lambda \left( {k - 1} \right){p_{k - 1}}(t) - \lambda k{p_k}(t)},\\
{\lambda \left( {k - 1} \right){p_{k - 1}}(t) - \lambda U{p_k}(t)},\\
{\lambda U{p_{k - 1}}(t) - \lambda U{p_k}(t)},
\end{array}} \right.
\begin{array}{*{20}{l}}
{{\text{if }} k = {k^0}}\\
{{\text{if }} k^0 < k \le \mathcal{L}}\\
{{\text{if }} k = \mathcal{L} + 1}\\
{{\text{if }} \mathcal{L} + 1 < k \le \mathcal{U}}\\
{{\text{if }} k = \mathcal{U} + 1}\\
{{\text{if }} \mathcal{U} + 1< k \le N_{\mathcal{T}}}
\end{array}
\end{equation}
which could be further simplified if $L = \mathcal{L}$ or $U = \mathcal{U}$.

It is noted that when deriving equation (\ref{eqn:DE-on-pmf-implicit}), a boundary term $\lambda {\hat{k}}{p_{k,{N_{\mathcal{T}}}}}(t)$ in (\ref{eqn:DE-on-pmf-implicit-w-BdyEffect}) is omitted, as it is very small as compared to other terms in equation (\ref{eqn:DE-on-pmf-implicit-w-BdyEffect}). Actually, in $p_k(t) = \sum\limits_{n = 2}^{N_{\mathcal{T}}} p_{k,n}(t)$, the term ${p_{k,{N_{\mathcal{T}}}}}(t)$ is the smallest term in the sum, and it vanishes as $\mathcal{T} \to \infty$. Hence, neglecting this boundary term would not cause significant difference in equation (\ref{eqn:DE-on-pmf-implicit}), but can greatly simplify the subsequent analysis. 

Pictorially, the process of summing equation (\ref{eqn:DE-on-2Dpmf}) over $n$ means that the above two-dimensional MC (in Fig. \ref{fig:model-2DMC}) is reduced to the following one-dimensional MC in the degree variable (with respect to the specified node *), which represents the degree dynamics, together with the consideration of big-bang phase, the degree dynamics is shown in Fig. \ref{fig:model-MC}, which is a birth process (differing from the Yule process \cite{yule:ptrsb:math} by having a lower bound and an upper bound on the birth rates) modified with a big-bang start (represented in the figure by the beginning transitions, which are parametrized by $p_1^0, p_2^0, \ldots, p_{\mathcal{L} + 1}^0$, corresponding to the probabilities of initiating $1, 2, \ldots,\mathcal{L} + 1$ connections due to the arrival of new nodes, respectively). In practice, these probabilities of the number of connections that each new attempt makes are network-specific parameters. Denote the probability mass function of the degree of the specified node *
by ${p_k}(t)= \Pr \left\{ {K(t) = k} \right\}$. The state transition rate diagram of its degree shown in Fig. \ref{fig:model-MC} has two interpretations depending on how each node is accounted in the degree distribution of the network. For accounting a node only when it arrives(i.e. $k^0 = 1$), Fig. \ref{fig:model-MC} represents the state transition rate diagram of its node degree, with initial probabilities of the state, $p_1\left(0\right) = p_1^0, p_2\left(0\right) = p_2^0, \ldots, p_{\mathcal{L} + 1}\left(0\right) = p_{\mathcal{L} + 1}^0$ (and the new node arrives at rate $\lambda L$), assuming that the node arrives at time 0 without loss of generality. For accounting a node even when it is isolated (i.e. $k^0 = 0$), Fig. \ref{fig:model-MC} represents the state transition rate diagram of its node degree (even when it has not yet arrived, i.e. with degree 0), with initial probabilities of the state, $p_0\left(0\right) = 1$ and $ p_k\left(0\right) = 0$, for all other $k \neq 0$.
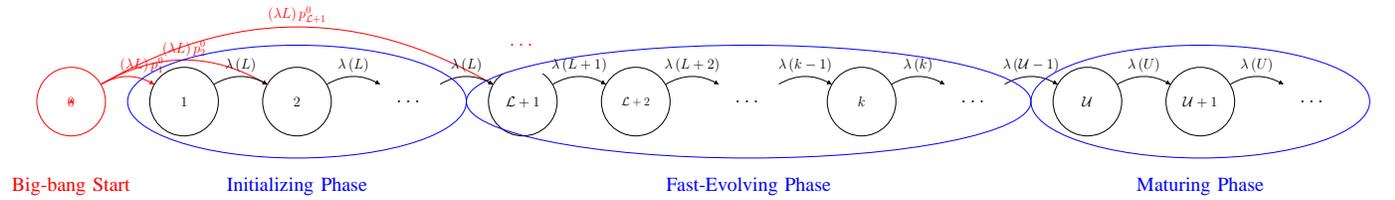
\begin{figure}[htbp!]
  \centering
  \vspace{-10pt}
  \resizebox{1.1\textwidth}{!}{%
  \begin{tikzpicture}[->, >=stealth', auto, semithick, node distance=4cm]
\tikzstyle{every state}=[fill=white,draw=black,thick,text=black,scale=1, minimum width={width("k-1, n-1")+30pt}]
\node[state]    (0)[draw=red, text=red]   {\large $0$};
\node[state]    (1)[right of=0]    {\large $1$};
\node[state]    (2)[right of=1]    {\large $2$};
\node[state]    (3)[right of=2,draw=white]    {\LARGE $\dots$};
\node[state]    (A)[right of=3]    {\large $\mathcal{L} + 1$};
\node[state][yshift=-2cm]    (B)[above of=A, draw=white,text=red]    {\LARGE $\dots$};
\node[state]    (C)[right of=A]   {$\mathcal{L} + 2$};
\node[state]    (D)[right of=C, draw=white]   {\LARGE $\dots$};
\node[state]    (E)[right of=D]   {\large $k$};
\node[state]    (F)[right of=E, draw=white]   {\LARGE $\dots$};
\node[state]    (U)[right of=F]   {\large $\mathcal{U}$};
\node[state]    (U+1)[right of=U]   {\large $\mathcal{U} + 1$};
\node[state]    (Final)[right of=U+1, draw=white]   {\LARGE $\dots$};
\path
(0) edge[bend left,above,red]      node[xshift=0.5cm] {\large $\left(\lambda L\right) p_1^0$}      (1)
(0) edge[bend left,above,red]      node{\large $\left(\lambda L\right) p_2^0$}      (2)
(0) edge[bend left,above,red]      node{\large $\left(\lambda L\right) p_{\mathcal{L} + 1}^0$}      (A)
(1) edge[bend left,above]      node{\large $\lambda \left(L \right)$}      (2)
(2) edge[bend left,above]      node{\large $\lambda \left(L \right)$}      (3)
(3) edge[bend left,above]      node{\large $\lambda \left(L \right)$}      (A)
(A) edge[bend left,above]      node{\large $\lambda \left(L + 1 \right)$}      (C)
(C) edge[bend left,above]      node{\large $\lambda \left(L + 2 \right)$}      (D)
(D) edge[bend left,above]      node{\large $\lambda \left(k - 1\right)$}      (E)
(E) edge[bend left,above]      node{\large $\lambda \left(k\right)$}      (F)
(F) edge[bend left,above]      node{\large $\lambda \left(\mathcal{U} - 1\right)$}      (U)
(U) edge[bend left,above]      node{\large $\lambda \left(U\right)$}      (U+1)
(U+1) edge[bend left,above]      node{\large $\lambda \left(U\right)$}      (Final);

\draw[red] ($(0)+(0,0)$) node[yshift=-3cm]{\LARGE Big-bang Start};
\draw[blue] ($(2)+(0,0)$) ellipse (6cm and 2cm)node[yshift=-3cm]{\LARGE Initializing Phase};
\draw[blue] ($(D)+(0,0)$) ellipse (10cm and 2cm)node[yshift=-3cm]{\LARGE Fast-Evolving Phase};
\draw[blue] ($(U+1)+(0,0)$) ellipse (6cm and 2cm)node[yshift=-3cm]{\LARGE Maturing Phase};

\end{tikzpicture}
}
  \vspace{-30pt}
  \caption{The state transition rate diagram of the MC representing the development of degree $k$ of each node of the network. The \textcolor{blue}{blue} part represents the state transition rate diagram of the Markov Chain with degree $k$ of each arrived node as the state variable. The \textcolor{red}{red} part represents new node coming at a rate of $\lambda L$, with the probability $p_i^0$ of making $i$ initial connections with existing nodes of the network. It leads the initial distribution of the degree (i.e. the states of the MC) of the arrived node to be $\Pr\left(\text{initial degree} = i\right) = p_i^0$.}
  \vspace{-10pt}
  \label{fig:model-MC}
\end{figure}

Now, one is ready to analyze the node degree distribution $p_k$. Beside deriving $p_k\left(t\right)$ from equation (\ref{eqn:DE-on-pmf}), one also needs to know how long (i.e. the residential-time, $T$) the specified node * has been staying in the network. Suppose that the network starts at time 0, and node * arrives at time $t_*$. Then, at an observation time $\mathcal{T}$, the residential-time $T$ of node * is $T_* = \mathcal{T} - t_*$. Denote the corresponding distribution as $f_T\left(t\right)$. This distribution depends on the differential-difference equation for the marginal
distribution of $N(t)$, which could be obtained by summing equation
(\ref{eqn:DE-on-2Dpmf}) over $k$. The detailed analysis and calculations of these distributions are presented in the Appendix.

\subsubsection{Comparison with existing models}
\label{subsubsec:existing-compare}

In this section, the new physical meaning addressed in the proposed MC model compared with existing models are discussed. Recall the MC model is proposed to represent the information consumption process on a complex network. Under this model, the network starts with a big-bang and evolves according to three phases of development — initializing phase, fast-evolving phase and maturing phase. In retrospect, the fast-evolving phenomenon in different networks is also reported in existing models, such as the BA model and its variants mentioned in Section \ref{sec:introduction} (the introduction section). When these BA-type models are used to describe information consumption patterns, and cast into the proposed MC model, those models have only one phase --- the fast-evolving phase, corresponding to the preferential attachment mechanism based only on the degrees of the existing nodes but not related to the size or other constraints of the current network structure. 

The major differences of this proposed MC model, as compared to existing models, are on the initializing phase and the maturing phase, both representing physical constraints but in different contexts - the initializing phase for characterizing the constraint due to system set-up time, and the maturing phase for characterizing the constraint due to physical, economical and technological limitations. These extra features in the proposed MC model provide some justifications for the observed trichotomy in density functions of various networks in Section \ref{sec:exp} (the experiment section), particularly next generation information networks, which was not addressed in existing literature. Specifically, it will be shown that most real-world networks have nodes in the initializing phase (particularly at the beginning of the network evolution) and nodes reaching the maturing phase (particularly after the network contains a sufficiently large number of nodes). 

Furthermore, differing from the existing literature, the proposed model includes a big-bang phase, which allows random multiple start capabilities. In particular, it helps in providing a more reasonable final close-form expression in node-degree distribution of the network presented in Theorem \ref{thm:trichotomyLaw} in the Section \ref{sec:theory}, which is closer to the empirical results, as well as the resultant model is more close to real situations. Specifically, in the big-bang phase, each time when a new information consumption attempt arrives, it will connect to one or multiple existing attempts at random. Such a possibility of multiple connections from a new attempt is also needed in the model as several important networks also have a bursty change (instead of incremental change) in the information consumption patterns upon each attempt arrival. Example networks include citation networks and the next generation vehicular networks. In citation networks, when a new paper is published, it will usually cite several (instead of one) papers; in vehicular network, when a new vehicle comes into a communication zone, it will be in contact (or in feasible communication region) with multiple (instead of one) vehicles. 

Mathematically, the BA-type models, from the proposed MC model viewpoint, are birth processes with a rate proportional to the state variable (i.e. a Yule process \cite{yule:ptrsb:math}), while the proposed general MC model is a modified Yule process with a rate accounting both the state variable and the node degree bounds on the connection probabilities due to these physical constraints, and a big-bang start. 

Therefore, the proposed MC
model is more realistic than the BA model in representing the information consumption processes and patterns in general complex networks. 

\section{Table of Notations} \label{sec:notations}

For the ease of reference in the discussions of later sections, a list of important system parameters (in the proposed MC model) is summarized in Table \ref{table:system-parameters}.

\begin{table}[htbp!]
  \centering
  {
  \begin{tabular}{|c|c|}
	\hline 
	Notation & Meaning \\
	\hline 
	$\mathcal{L}$ & Lower boundary value of the 	degree of a node signifying the end of the initializing phase \\ 
	\hline 
	$\mathcal{U}$ & Upper boundary value of the 	degree of a node signifying the start of the maturing phase \\ 
	\hline 
	$L$ & Lower bound of modified degree for lower bounding the attachment probability in the whole initializing phase \\ 
	\hline 
	$U$ & Upper bound of modified degree for upper bounding the attachment probability in the whole maturing phase \\
	\hline 
	$\lambda$ & Normalized arrival rate, i.e., average arrival rate of new attempts divided by $L$\\   
	\hline 
	$\mathcal{T}$ & Current observation time of the system \\ 
	\hline 
	$N_{\mathcal{T}}$ & The number of nodes in the system at time $\mathcal{T}$ \\ 
	\hline 
	\end{tabular}
	}
	\caption{A list of important system parameters}
	\label{table:system-parameters}
\end{table}

As time $t$ increases from time $0$ to the observation time $\mathcal{T}$, the system parameters are fixed, but the system variables could change. A list of important system variables used is also summarized in Table \ref{table:system-variables}.

\begin{table}[htbp!]
  \centering
  {
\begin{tabular}{|c|c|}
\hline 
Notation & Meaning \\ 
\hline 
$k_{*}$ & Degree of a specific node * \\ 
\hline 
$n$ & Network size - i.e., number of nodes in the network \\ 
\hline 
$\hat{k_{*}}$ & Modified degree of a specific node * defined in equation (\ref{def:modified-deg}) \\ 
\hline 
$T_{*}$ & 
Residential-time of a specific node *, i.e., $\mathcal{T} - t_*$, where $t_*$ is the arrival time of node *
\\ 
\hline 
$t$ & 
Running time index of the Markov Chains (M.C.) 
\\
\hline 
\end{tabular} 
	}
	\vspace{10pt}
	\caption{A list of important system variables}
	\label{table:system-variables}
\end{table}

In later discussions, the subscript * used in the notations of the above variables (presented in Table \ref{table:system-variables}) would be omitted, whenever the specified node * is explicitly defined in the text, e.g., $k$ will be used as an abbreviation of $k_*$, and $T$ will be used as an abbreviation of $T_*$. Since the values of the system variables depend on time $t$, this time variable $t$ will be explicitly specified in the text whenever there is a need of clarification, e.g., $p_{k, n}(t)$ is used to denote the probability that "the degree of the specified node * $= k$, and the network size $= n$, at time $t$".

\section{Theoretical Results of the proposed models}
\label{sec:theory}

Mathematically, one can further see that the proposed MC model is a generalization of several existing models such as the Poisson Network model, Exponential Network model, and Power-law Network (e.g. BA) model, by appropriately setting the physical constraint-related parameters: lower bound $L$, lower threshold $\mathcal{L}$, upper bound $U$, and upper threshold $\mathcal{U}$, as proved in the following subsection \ref{subsec:unifying-theorems}.

Furthermore, the proposed MC model can derive new closed-form results on degree distributions which fit much better with the empirical data than the classical models, which is also proved in the following subsection \ref{subsec:unifying-trichotomy}. In particular, this MC model is the first model to explain the observed trichotomy phenomenon. 

\subsection{Generality of the MC model}
\label{subsec:unifying-theorems}

\subsubsection{MC generalizes the Poisson network model and exponential network model}
\label{subsubsec:unifying-Poisson}

\begin{theorem} \label{thm:Poisson}
When $L = \mathcal{L} = \mathcal{U} = U$ in the MC model (where $p_k^0 = 0$ for all $k$ except
$p_1^0 = 1$), as $\mathcal{T} \to \infty$, it reduces to one of the two classical models respectively, for
\begin{itemize}
\item Case 1 --- accounting a large and fixed set of nodes (starting with 0 degree, i.e., $k^0 = 0$):\\ it reduces to the Poisson network model; \item Case 2 --- accounting all nodes, but excluding isolated nodes (i.e., $k^0 = 1$):\\it reduces to the exponential network model.
\end{itemize}
\end{theorem}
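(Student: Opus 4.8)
The plan is to specialize the degree dynamics to the degenerate regime $L=\mathcal{L}=\mathcal{U}=U$, solve the resulting elementary birth process for a single node as a function of its residential time, and then compose that conditional law with the residential-time distribution in two different ways dictated by the accounting conventions $k^0=0$ and $k^0=1$. First I would observe that setting $L=\mathcal{L}=\mathcal{U}=U$ in the modified-degree definition (\ref{def:modified-deg}) forces $\widehat{k}=L$ for \emph{every} degree $k$, so the fast-evolving middle range is empty and all six branches of (\ref{eqn:DE-on-pmf}) collapse to the single constant-rate recursion
\[
\frac{d}{dt}p_k(t)=\lambda L\,p_{k-1}(t)-\lambda L\,p_k(t),\qquad p_{k^0-1}\equiv 0 .
\]
This is a pure Poisson birth process of constant rate $\lambda L$, i.e.\ exactly the uniform (non-preferential) attachment limit in which the Yule-type mechanism degenerates.

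Second, I would solve this recursion. Writing $G(z,t)=\sum_k p_k(t)z^k$, the recursion becomes $\partial_t G=\lambda L(z-1)G$, so with $G(z,0)=z^{k^0}$ one gets $G(z,t)=z^{k^0}e^{\lambda L t(z-1)}$ and hence the shifted Poisson law $p_k(t)=e^{-\lambda L t}(\lambda L t)^{k-k^0}/(k-k^0)!$ for $k\ge k^0$. For Case~1 this is $e^{-\lambda L t}(\lambda L t)^{k}/k!$ and for Case~2 it is $e^{-\lambda L t}(\lambda L t)^{k-1}/(k-1)!$.

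Third, I would supply the residential-time information, which is where the two cases diverge. Since $\widehat{k}_i\equiv L$ makes the total modified degree $S_{n-1}=(n-1)L$ for a network of size $n-1$, summing (\ref{eqn:DE-on-2Dpmf}) over $k$ (the network-size dynamics treated in the Appendix) gives a size process whose growth rate is proportional to its own size, $\tfrac{dN}{dt}\approx\lambda L\,N$; hence $N(t)$ is exponential in $t$, and as $\mathcal{T}\to\infty$ a uniformly sampled node has residential time $T$ that is exponential of rate $\lambda L$, namely $f_T(t)=\lambda L\,e^{-\lambda L t}$. In Case~2 ($k^0=1$) a node is counted only from its arrival, so I would mix the shifted-Poisson law over $f_T$,
\[
\pi_k=\int_0^{\infty}\frac{(\lambda L t)^{k-1}}{(k-1)!}\,e^{-\lambda L t}\,\lambda L\,e^{-\lambda L t}\,dt=2^{-k},
\]
a geometric (discrete exponential) law, which is the exponential network model. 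In Case~1 ($k^0=0$) a node is instead accounted from time $0$, carrying degree $0$ even before it actively arrives, so every node in the snapshot shares the common age $\mathcal{T}$ and no mixing takes place: the network degree law is directly $p_k(\mathcal{T})=e^{-\lambda L\mathcal{T}}(\lambda L\mathcal{T})^{k}/k!$, the Poisson network model.

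The step I expect to be the main obstacle is the residential-time argument, and specifically the explanation of \emph{why} the two accounting conventions force genuinely different compositions. Establishing the exponential $f_T$ rigorously needs the size equation from the Appendix together with a careful $\mathcal{T}\to\infty$ limit that also discards the boundary term already neglected in passing from (\ref{eqn:DE-on-pmf-implicit-w-BdyEffect}) to (\ref{eqn:DE-on-pmf-implicit}). One must then argue cleanly that counting a node from time $0$ ($k^0=0$) assigns all nodes a common, deterministic age --- a static, Erd\"{o}s--R\'{e}nyi-type node set that yields the Poisson law --- whereas counting a node from its arrival ($k^0=1$) induces the exponential age spread of a genuinely growing node set and hence the geometric tail. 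Once these two compositions are justified, the closed forms follow from the elementary Poisson solution and the single Gamma integral above.
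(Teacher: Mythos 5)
Your proposal is correct and follows essentially the same route as the paper: collapse the dynamics to a constant-rate birth process once $\widehat{k}\equiv L$, obtain the (shifted) Poisson law for $p_k(t)$, then treat Case 1 via the deterministic common age $T=\mathcal{T}$ and Case 2 by mixing over an exponential($\lambda L$) residential time coming from the Yule-type growth of the network size, arriving at the geometric law with parameter $\tfrac{1}{2}$. The only difference is cosmetic: you solve the degree recursion with a probability generating function and an explicit Gamma integral, whereas the paper uses Laplace transforms with induction and evaluates the mixture as $\lambda L\,P_k(\lambda L)$ --- interchangeable computations yielding the same formulas.
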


\begin{proof}
When $L = \mathcal{L} = \mathcal{U} = U$, from equation (\ref{eqn:DE-on-pmf}), when $\mathcal{T}$ is large enough, the dynamics of the degree distribution satisfy
\begin{equation} \label{eqn:DE-on-pmf-PoiGeom}
\frac{d}{{dt}}{p_k}(t) = \left\{ {\begin{array}{*{20}{c}}
{ - \lambda L{p_k}(t)},\\
{\lambda L{p_{k - 1}}(t) - \lambda L{p_k}(t)},
\end{array}} \right.\begin{array}{*{20}{c}}
{{\text{if }}k = k^0},\\
{{\text{if }} k > k^0}.\\
\end{array}
\end{equation}
By taking the Laplace transform, one has
\begin{eqnarray}
s{P_k}\left( s \right) - {p_k}\left( 0 \right)
&=& \lambda L{P_{k - 1}}\left( s \right)
    - \lambda L{P_k}\left( s \right),\nonumber\\
{P_k}\left( s \right) &=& \frac{{\lambda L}}{{s + \lambda L}}{P_{k - 1}}
    \left( s \right) + \left( {\frac{p_k \left(0\right)}{{s + \lambda L}}}
    \right), \label{eqn:Laplace-Geomtric}
\end{eqnarray}
where ${P_{k^0 - 1}}\left( s \right)$ is defined to be 0.
\begin{itemize}
\item In Case 1, $k^0 = 0$,  $P_{-1}(s)$ is defined to be $0$, $p_k\left(0\right) = 0$ for all $k$ except $p_0\left(0\right) = 1$. Thus, by equation (\ref{eqn:Laplace-Geomtric}), $P_0(s) = \frac{1}{s + \lambda L}$, and by mathematical induction (using equation (\ref{eqn:Laplace-Geomtric})),
\begin{eqnarray}\label{eqn:Laplace-PoissonNetwDeg0}
{P_k}\left( s \right)
&=& \sum\limits_{i = 0}^{k - 1} {\left( {\frac{{\lambda L}}{{s + \lambda L}}}
    \right)^{k - i}}\left( {\frac{p_i \left(0\right)}{{s + \lambda L}}}
    \right)\nonumber\\
&=& {\left( {\lambda L} \right)^{k}}
    {\left( {\frac{1}{{s + \lambda L}}} \right)^{k + 1}}\nonumber\\
&=& \frac{{{{\left( {\lambda L} \right)}^{k}}}}
    {{\left( {k} \right)!}}\frac{{\left( {k} \right)!}}
    {{{{\left( {s + \lambda L} \right)}^{k+ 1}}}}.
\end{eqnarray}
Then, the inverse Laplace transform yields
\begin{eqnarray}\label{eqn:PoissonNetwDeg0-pmf}
{p_k}\left( t \right)
&=& \frac{{{{\left( {\lambda L}\right)}^{k}}}}
    {{\left( {k} \right)!}}{t^{k}}
    {e^{ - \lambda Lt}}\nonumber\\
&=& \frac{{{{\left( {\lambda Lt} \right)}^{k}}
 {e^{ - \lambda Lt}}}}{{\left( {k} \right)!}}.
\end{eqnarray}
The specified set of nodes have been existing in the network since the starting of the network formation process, so their residential time is $T = \mathcal{T}$. Then, the degree distribution of these nodes is given by ${p_k}\left(T \right) = \frac{{{{\left( {\lambda L T} \right)}^{k}}
 {e^{ - \lambda L T}}}}{{\left( {k} \right)!}}$, which is a Poisson distribution with parameter $\lambda L T$. This is known as the Poisson network introduced by Erd\"{o}s and R\'{e}nyi \cite{ErdoRenyi:pm:rg}.
\item In Case 2, $k^0 = 1$,  $P_{0}(s)$ is defined to be $0$, $p_k\left(0\right) = 0$ for all $k$ except $p_1\left(0\right) = 1$. Thus by equation (\ref{eqn:Laplace-Geomtric}), $P_1(s) = \frac{1}{s + \lambda L}$, whence 
\begin{eqnarray}\label{eqn:Laplace-ExpNetwoDeg0}
{P_k}\left( s \right)
&=& \sum\limits_{i = 1}^{k - 1} {\left( {\frac{{\lambda L}}{{s + \lambda L}}}
    \right)^{k - i}}\left( {\frac{p_i \left(0\right)}{{s + \lambda L}}}
    \right)\nonumber\\
&=& {\left( {\lambda L} \right)^{k{\rm{ - 1}}}}{\rm{ }}
    {\left( {\frac{{\rm{1}}}{{s + \lambda L}}} \right)^k}\nonumber\\
&=& \frac{{{{\left( {\lambda L} \right)}^{k- 1}}}}
    {{\left( {k - 1} \right)!}}\frac{{\left( {k - 1} \right)!}}
    {{{{\left( {s + \lambda L} \right)}^k}}}.
\end{eqnarray}

Then, the inverse Laplace transform yields
\begin{eqnarray}\label{eqn:ExpNetwoDeg0-pre-pmf}
{p_k}\left( t \right)
&=& \frac{{{{\left( {\lambda L}\right)}^{k - 1}}}}
    {{\left( {k - 1} \right)!}}{t^{k - 1}}
    {e^{ - \lambda Lt}}\nonumber\\
&=& \frac{{{{\left( {\lambda Lt} \right)}^{k - 1}}
 {e^{ - \lambda Lt}}}}{{\left( {k - 1} \right)!}},
\end{eqnarray}
which is a Poisson distribution with parameter $\lambda L$.

\hspace{5pt}
Note that the above $p_k(t)$ only represents the degree distribution of a node * at time $t$ if it arrives at time $0$. More generally, if a node $i$ arrives at time $a_i$, its degree distribution at the observation time $\mathcal{T}$ is given by $p_k\left(\mathcal{T} - a_i \right)$.

\hspace{5pt}
Denote the residential-time of each node $i$ by $T_i = \mathcal{T} - a_i$. Then, the degree distribution of the network is given by $p_k = \frac{1}{N_{\mathcal{T}}}\sum\limits_{i = 1}^{N_{\mathcal{T}}} \int_{0}^{\mathcal{T}} p_{k_i}\left(t \right) f_{T_i} \left(t \right) dt = {\mathds{E}}_{i \sim U\{1, \cdots, N_{\mathcal{T}}\}} {\mathds{E}}_{T_i} \left[ p_{k_i} \left(t \right)\right]$,
where $N_{\mathcal{T}}$ denotes the total number of nodes in the network at the observation time $\mathcal{T}$, and $f_{T_i} \left(t \right)$ denotes the probability density function of the residential-time $T_i$ of node $i$. The symbol $i \sim U\{1, \cdots, N_{\mathcal{T}} \}$ represents that a node $i$ is uniformly selected among the nodes ranged from $1$ to $N_{\mathcal{T}}$.

\hspace{5pt}
In other words, the degree distribution of the network is given by $p_k = {\mathds{E}}_{T} \left[p_k \left(t \right) \right]$, where $T$ is the random variable denoting the residential-time of a randomly picked node, and $p_k \left(t \right)$ is the degree distribution of this randomly picked node with residential-time $T = t$. Thereafter, a crucial step of computing $p_k$ is to find the distribution of the residential-time $T$. The detailed derivation, when $L=\mathcal{L}=\mathcal{U}=U$, is presented in Case 1 of Appendix \ref{AppendixA:residentialTimeByExtOnly}, and is shown to be exponentially distributed with
parameter $\lambda L$, abbreviated as $\exp\left(\lambda L\right)$.

\hspace{5pt}
As an illustration on the computation of the distribution of the residential-time $T$, consider a special case in the MC model where $L=\mathcal{L}=\mathcal{U}=U=1$. The network dynamics satisfy
\[
\frac{d}{{dt}}{p_n}(t) = \lambda \left( {n -1}
\right){p_{n - 1}}(t) - \lambda n{p_n}(t),
\]
which is obtained by summing equation (\ref{eqn:DE-on-2Dpmf}) over $k$. Note that this is a Yule process, which is shown in \cite{yule:ptrsb:math} that the residential-time of a node has an exponential distribution with rate $\lambda$.

\hspace{5pt}
Back to the general case, after averaging the Poisson density $p_k(t)$ in equation (\ref{eqn:ExpNetwoDeg0-pre-pmf}) with the residential-time distribution $\exp\left(\lambda L\right)$, one obtains
\begin{eqnarray} \label{eqn:ExpNetwoDeg0-pmf}
{p_k} &=& {\mathds{E}_T}\left[ {{p_k}\left( t \right)} \right]
 = \int_0^\infty {{\frac{{{{\left( {\lambda Lt}
 \right)}^{k - 1}}{e^{ - \lambda Lt}}}}{{\left( {k - 1}
 \right)!}}}\lambda L {e^{ - \lambda L t}}dt} \nonumber \\
&=& \lambda L {P_k \left(\lambda L \right)}\nonumber \\
&=& \frac{{{{\left( {\lambda L} \right)}^k}}}
 {{\left( {k - 1} \right)!}}\frac{{\left( {k - 1} \right)!}}
 {{{{\left( {\lambda L + \lambda L} \right)}^k}}}\nonumber\\
&=& \frac{1}{2} \left(\frac{1}{2} \right)^{k-1},
\end{eqnarray}
which is a geometric distribution with parameter $\frac{1}{2}$, known as the node-degree distribution of an exponential network \cite{callaway:pre:rgg}, since the geometric distribution is the discrete version of the exponential distribution. 
\end{itemize}

\end{proof}

As a remark, it can be seen from the above proof that the starting degree $k^0 = 0$ (with a fixed set of nodes) or $k^0 = 1$ (with all nodes) corresponds to the well-known classical Erd\"{o}s-R\'{e}nyi (ER) model and exponential network model, respectively. This paper in particular focuses on information consumption modeling. In typical scenarios, only when the new information consumption attempt starts to initiate a connection, its existence will be aware by the system, and the system will start to account this new attempt which has one connection, so the starting degree $k^0$ of any node in the network is $1$. Therefore, in all subsequent discussions, the starting degree $k^0 = 1$ will be used.

It is remarked that from the proof, one can also derive the probability mass function of the node degrees of the network, under the general case $p_1\left(0\right) \neq 1$, as follows
\begin{eqnarray} \label{eqn:mix-truncGeom-pmf}
p_k &=& \sum\limits_{i = 1}^{k - 1} {p_i \left(0\right)}
 {\left( {\frac{{\lambda L}}{{\lambda L + \lambda L}}}
 \right)^{k - i}}\left( {\frac{\lambda L}{\lambda L +
 \lambda L}} \right)\nonumber\\
&=& \sum\limits_{i = 1}^{k - 1} {p_i \left(0\right)}
 \left( {\frac{1}{2}}
 \right){\left( {\frac{1}{2}} \right)^{k - i}},
\end{eqnarray}
which is a mixture of consecutively truncated geometric distributions with parameter $\frac{1}{2}$, and the mixing
probability (or the $i$-th truncation probability) being
$p_i\left(0\right)$. 

Equation (\ref{eqn:mix-truncGeom-pmf}) looks like the discrete analog of the hyperexponential distribution. For hyperexponential distribution, as a mixture of some exponential distributions, the empirical data generated from this distribution can be well-fitted to the distribution using the well-known Prony method (widely used in the engineering literature, e.g., in power systems \cite{hauer:ps90:prony}). However, there are still some differences between equation (\ref{eqn:mix-truncGeom-pmf}) and the hyperexponential distribution. Precisely, equation (\ref{eqn:mix-truncGeom-pmf}) is a mixture of \textit{truncated} geometric distributions. A simple generalization to the discrete analog of the Prony method for parameter estimation on fitting the empirical data may not be good enough, so a further extension of the method is needed to handle the effect of truncations. Since the detailed design of this possibly optimal fitting procedure still needs further investigation, it is left for future research. In this paper, only a simple heuristic procedure is proposed for fitting data from the mixture of truncated geometric distributions (see Section \ref{sec:fitting} for details).

\subsubsection{MC generalizes the power-law model}
\label{subsubsec:unifying-PowerLaw}

\begin{theorem} \label{thm:PowerLaw}
When $L = \mathcal{L} = 1$ and $U = \mathcal{U} = \infty$ in the MC model (where $p_k(0) = 0$
for all $k$ except $p_1(0) = 1$, and  excluding isolated nodes), as $\mathcal{T} \to \infty$, 
it reduces to the power-law model.
\end{theorem}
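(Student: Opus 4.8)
The plan is to reuse the two-stage strategy of the proof of Theorem~\ref{thm:Poisson}: first solve the degree dynamics of node * conditioned on its residential time, then average the resulting conditional distribution against the residential-time density. Under the hypotheses $L = \mathcal{L} = 1$, $U = \mathcal{U} = \infty$, and $k^0 = 1$, definition~(\ref{def:modified-deg}) gives $\widehat{k} = k$ for every $k \ge 1$, because the lower clamp coincides with the starting degree $k=1$ and the upper clamp never binds. Equation~(\ref{eqn:DE-on-pmf}) then collapses to the single unbounded branch
\begin{equation*}
\frac{d}{dt} p_k(t) = \lambda(k-1) p_{k-1}(t) - \lambda k\, p_k(t), \qquad k \ge 1, \quad p_1(0) = 1,
\end{equation*}
which is exactly the Yule process referenced in Section~\ref{subsubsec:existing-compare}. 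Taking the Laplace transform as in~(\ref{eqn:Laplace-Geomtric}) yields the recursion $P_k(s) = \frac{\lambda(k-1)}{s+\lambda k} P_{k-1}(s) + \frac{p_k(0)}{s+\lambda k}$, which telescopes from $P_1(s) = 1/(s+\lambda)$ to $P_k(s) = \lambda^{k-1}(k-1)!\big/\prod_{j=1}^{k}(s+\lambda j)$. Inverting by partial fractions (or simply invoking the classical Yule solution) gives the conditional degree distribution $p_k(t) = e^{-\lambda t}\bigl(1 - e^{-\lambda t}\bigr)^{k-1}$.

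Next I would determine the residential-time density $f_T$. Summing~(\ref{eqn:DE-on-2Dpmf}) over $k$ gives the network-size process $N(t)$, a linear pure-birth process whose exit rate from size $n$ is $\lambda S_n$. Since each arriving node contributes exactly one edge ($k^0 = 1$), the network is a tree with modified-degree sum $S_n = 2(n-1) \approx 2n$; hence the per-capita growth rate is $2\lambda$, \emph{twice} the rate $\lambda$ of the exponential-network case in Theorem~\ref{thm:Poisson}. Following the appendix computation, the age of a uniformly chosen node is then exponential with rate $2\lambda$, i.e. $f_T(t) = 2\lambda e^{-2\lambda t}$.

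Finally I would average, exactly as in~(\ref{eqn:ExpNetwoDeg0-pmf}):
\begin{equation*}
p_k = \mathds{E}_T\bigl[p_k(t)\bigr] = \int_0^\infty e^{-\lambda t}\bigl(1 - e^{-\lambda t}\bigr)^{k-1}\, 2\lambda e^{-2\lambda t}\, dt.
\end{equation*}
The substitution $u = e^{-\lambda t}$ reduces this to $2\int_0^1 u^2 (1-u)^{k-1}\,du = 2\,B(3,k) = \frac{4}{k(k+1)(k+2)}$, which for large $k$ behaves like $4k^{-3}$ --- a power law with exponent $3$, precisely the classical Barab\'asi--Albert degree distribution. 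This identifies the MC model under these parameters with the power-law (BA) model and completes the argument.

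The step I expect to be the main obstacle is the residential-time analysis, which is why it is deferred to the appendix: one must justify that $S_n \approx 2n$ and that the size process $N(t)$ yields an exponential age distribution with the \emph{doubled} rate $2\lambda$. It is exactly this factor of two (the mean modified degree) that promotes the conditional geometric law into a power law and fixes the exponent at $\mu/\lambda + 1 = 3$; by comparison, the Yule solution and the final Beta-function integral are routine.
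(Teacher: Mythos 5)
Your proposal is correct and follows essentially the same route as the paper's proof: reduce the dynamics to the Yule process $\frac{d}{dt}p_k(t)=\lambda(k-1)p_{k-1}(t)-\lambda k\,p_k(t)$, solve it via the Laplace-transform recursion, take the residential-time density to be exponential with rate $2\lambda$ (the paper's Appendix computes $S_n = n + (n-2) = 2(n-1)$, which is exactly your tree-degree-sum argument), and average to obtain $p_k = \frac{4}{k(k+1)(k+2)} \sim k^{-3}$. The only cosmetic difference is that you evaluate the final integral by the substitution $u = e^{-\lambda t}$ and a Beta function, whereas the paper evaluates it as $2\lambda P_k(2\lambda)$; these are the same computation.
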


\begin{proof}
Recall the dynamics of the degree distribution of a node satisfy equation (\ref{eqn:DE-on-pmf-implicit-w-BdyEffect}). As $\mathcal{T} \to \infty$, it can be approximated by equation (\ref{eqn:DE-on-pmf-implicit}). Furthermore, since $L = \mathcal{L} = 1$ and $U = \mathcal{U} = \infty$, the dynamics of the degree distribution can be further simplified as follows:
\[
\frac{d}{{dt}}{p_k}(t) = \lambda \left( {k - 1}
\right){p_{k - 1}}(t) - \lambda k{p_k}(t).
\]

Its Laplace transform is
\[
s{P_k}\left( s \right) - {p_k}\left( 0 \right) =
\lambda \left( {k - 1} \right){P_{k - 1}}\left( s \right)
- \lambda k{P_k}\left( s \right).
\]
Thus,
\begin{eqnarray} \label{eqn:Laplace-PowerLaw}
{P_k}\left( s \right)
&=& \left( {\frac{{\lambda \left( {k - 1}
 \right)}}{{s + \lambda k}}} \right){P_{k - 1}}\left( s \right)
 + \left( {\frac{p_k \left(0\right)}{{s + \lambda k}}}
 \right)\nonumber\\
&=& \sum\limits_{i = 1}^{k} \left( \prod\limits_{j = 1}^{k - i}
 \left(\frac{{\lambda \left(k - j\right)}}{{s + \lambda
 \left( {\left(k - j\right) + 1} \right)}}\right)\right)
 \left( {\frac{p_i \left(0\right)}{{s + \lambda i}}}
 \right)\nonumber\\
&=& \frac{{{\lambda ^{k - 1}}\left( {k - 1} \right)!}}
 {{\prod\limits_{i = 2}^k {(s + i\lambda )} }}
 \left( {\frac{p_1 \left(0\right)}{{s + \lambda }}}
 \right)\nonumber\\
&=& \frac{{{\lambda ^{k - 1}}\left( {k - 1} \right)!}}
 {{\prod\limits_{i = 1}^k {(s + i\lambda )} }},
\end{eqnarray}
where $P_0\left(s\right)$ is defined to be $0$, as $P_1\left(s\right)$
is obtained according to equation (\ref{eqn:DE-on-pmf}), and by assumption $p_k(0) = 0$ for all $k$ except $p_1(0) = 1$.

The detailed derivation of the residential-time of any specific node *, when $L = \mathcal{L}$ and $U = \mathcal{U} = \infty$, is presented in Case 2 of Appendix \ref{AppendixA:residentialTimeByExtOnly}, and is shown to be exponentially distributed with rate $2 \lambda$ as $\mathcal{T} \to \infty$. After averaging the time-dependent node-degree distribution $p_k\left( t\right)$ with this residential-time distribution, one obtains a power law as follows:
\begin{eqnarray}\label{eqn:PowerLaw-pmf}
{p_k} &=& {E_T}\left[ {{p_k}\left( t \right)} \right]
 = \int_0^\infty  {{e^{ - \lambda t}}{{\left( {1 - {e^{ - \lambda t}}}
 \right)}^{k - 1}}2\lambda {e^{ - 2\lambda t}}dt}\nonumber \\
&=& 2\lambda {P_K}\left( 2\lambda \right)\nonumber \\
&=& \frac{{{2 \lambda ^k}\left( {k - 1} \right)!}}
 {{\prod\limits_{i = 1}^k {( 2\lambda + i\lambda )} }}\nonumber \\
&=& \frac{{2\left( {2!} \right)\left( {k - 1} \right)!}}
 {{\left( {k + 2} \right)!}}\nonumber \\
&=& \frac{4}{{k\left( {k + 1} \right)\left( {k + 2}\right)}}
 \sim {k^{ - 3}}.
\end{eqnarray}
\end{proof}

It is remarked that one could follow the same procedure of the proof starting from equation (\ref{eqn:Laplace-PowerLaw}) to obtain the probability mass function of the node degrees of the network, for the general case of $p_1\left(0\right) \neq 1$, as follows:
\begin{eqnarray}
p_k &=& \sum\limits_{i = 1}^{k} {p_i \left(0\right)}
 \left( \prod\limits_{j = 1}^{k - i}
 \left(\frac{{\lambda \left(k - j\right)}}{{ 2 \lambda
 + \lambda \left( {\left(k - j\right) + 1} \right)}}\right)\right)
 \left( {\frac{2 \lambda}{{2 \lambda + \lambda i}}} \right) \nonumber\\
&=& \sum\limits_{i = 1}^{k} {p_i \left(0\right)}
 \left( \prod\limits_{j = i}^{k - 1}
 \left(\frac{j}{j + 3}\right)\right) \left( {\frac{2}{{2 + i}}} \right),
\end{eqnarray}
which is a mixture of some consecutively truncated power-law distributions having a slope parameter $-3$, , specifically ${\frac{2}{{2 + i}}} \prod\limits_{j = i}^{k - 1} \left(\frac{j}{j + 3}\right)$ (for $i \in \left\{1, \cdots, k \right\}$) with the mixing probability (or the $i$-th truncation probability) on the $i$-th term being $p_i\left(0\right)$. The resultant distribution is still a power law with a slightly smaller slope parameter $-3+\epsilon$, where $\epsilon > 0$ is a very small value.

\subsection{Capability of the MC model}
\label{subsec:unifying-trichotomy}

In Section \ref{sec:exp} below, experimental results on information consumption patterns in different network datasets are presented. Notably, all real networks have three phases in their degree distributions, although the three phases in different networks start at different times and last for different durations. By using the proposed MC model, this paper is the first to demonstrate such phenomena in information consumption processes. In fact, the MC model can offer an analytical closed-form expression of the
degree distribution and is capable of explaining the observed three phases in
empirical degree distributions in real networks.

The results are summarized as follows:

\begin{theorem} \label{thm:trichotomyLaw}
The degree distribution of the MC model with initial condition $p_1 \left(0\right) = 1,\, p_k \left(0 \right) = 0$ for all \text{other} $k$, is given by
\begin{equation} \label{eqn:trichotomyLaw-special}
{p_k} \sim
\begin{cases}
{c \cdot \text{geom}\left( {\frac{\gamma}{\gamma + L}} \right)}, & {\text{if }} 1 \le k \le \mathcal{L} \\
c \cdot \text{power-law with exponent} -\left(\gamma + 1\right), &
{\text{if }} \mathcal{L} < k \le \mathcal{U}  \\
{c \cdot c_{PL\left(\mathcal{U} \right)} \cdot \text{geom}\left( {\frac{\gamma}{\gamma + U}} \right)}, &
{\text{if }} \mathcal{U} < k \le N_{\mathcal{T}}
\end{cases}.
\end{equation}
For the general initial condition on $p_k \left(0 \right)$ for all $k$,
it is given by 
\begin{equation} \label{eqn:trichotomyLaw-general}
{p_k} \sim
\begin{cases}
c \cdot \sum\limits_{i = 0}^k p_i\left(0 \right) \times i{\text{th}}\
\text{truncated-geom}\left( {\frac{\gamma}{\gamma + L}} \right), &
{\text{if }} 1 \le k \le \mathcal{L} \\
c \cdot \text{power-law with exponent} -\left(\gamma + 1\right) - \epsilon, &
{\text{if }} \mathcal{L}
< k \le \mathcal{U}  \\
{c \cdot c_{PL\left(\mathcal{U} \right)} \cdot \sum\limits_{i = \mathcal{U}}^k p_i\left(0 \right) \times i{\text{th}}\
\text{truncated-geom}\left( {\frac{\gamma}{\gamma + U}} \right)}, &
{\text{if }} \mathcal{U} < k \leq N_{\mathcal{T}}
\end{cases}
\end{equation}
where $L \leq \gamma \leq L + 1$. For $U \ll N_{\mathcal{T}}$, one has $\gamma \approx L$, while for $U \sim N_{\mathcal{T}}$, one has $\gamma \approx L + 1$; $c$ is a normalization constant making the
total probability to be 1, i.e., $\sum_k p_k = 1$; $c_{PL\left({\mathcal{U}} \right)}$ is a constant obtained by multiplying $\frac{\mathcal{U}}{\gamma}$ with the value of the power law probability mass function at $k = \mathcal{U}$; $N_{\mathcal{T}}$ is the network size at the current time ${\mathcal{T}}$ and $\epsilon < 1$ is a small constant.
\end{theorem}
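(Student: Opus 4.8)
The plan is to reuse the two-stage strategy of Theorems~\ref{thm:Poisson} and~\ref{thm:PowerLaw}: first solve for the time-dependent degree law $p_k(t)$ through the Laplace transform of the piecewise birth equation~(\ref{eqn:DE-on-pmf}), and then recover the network-wide distribution by averaging against the residential-time density, $p_k=\mathds{E}_T[p_k(t)]$. The only genuinely new feature is that the birth rate in~(\ref{eqn:DE-on-pmf}) is piecewise: constant ($\lambda L$) on the initializing range $1\le k\le\mathcal{L}$, degree-proportional ($\lambda k$) on the fast-evolving range $\mathcal{L}<k\le\mathcal{U}$, and constant again ($\lambda U$) on the maturing range $\mathcal{U}<k\le N_{\mathcal{T}}$. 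I would solve the Laplace recursion separately on each range and glue the pieces at the interfaces $k=\mathcal{L}+1$ and $k=\mathcal{U}+1$.

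Taking the Laplace transform of~(\ref{eqn:DE-on-pmf}) yields in each range a first-order recursion $P_k(s)=r_k(s)\,P_{k-1}(s)+p_k(0)/(s+\mu_k)$, precisely of the type already solved in~(\ref{eqn:Laplace-Geomtric}) and~(\ref{eqn:Laplace-PowerLaw}). On the initializing range the multiplier is the constant $r_k(s)=\lambda L/(s+\lambda L)$, so iteration reproduces the exponential-network form~(\ref{eqn:Laplace-ExpNetwoDeg0}); on the fast-evolving range it is $\lambda(k-1)/(s+\lambda k)$, so the telescoping product reproduces the power-law form~(\ref{eqn:Laplace-PowerLaw}); on the maturing range it reverts to the constant $\lambda U/(s+\lambda U)$ and again gives a geometric form. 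The carried-over values $P_{\mathcal{L}}(s)$ and $P_{\mathcal{U}}(s)$ supply the interface prefactors, in particular the power-law value at $k=\mathcal{U}$ rescaled by $\mathcal{U}/\gamma$, which is exactly the announced constant $c_{PL(\mathcal{U})}$.

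Next I would invoke the residential-time distribution. As computed for the limiting cases in Appendix~\ref{AppendixA:residentialTimeByExtOnly}, $T$ is asymptotically exponential as $\mathcal{T}\to\infty$, here with rate $\gamma\lambda$ for some $L\le\gamma\le L+1$, so that the averaging collapses to $p_k=\gamma\lambda\,P_k(\gamma\lambda)$, just as $p_k=\lambda L\,P_k(\lambda L)$ in~(\ref{eqn:ExpNetwoDeg0-pmf}) and $p_k=2\lambda\,P_k(2\lambda)$ in~(\ref{eqn:PowerLaw-pmf}). Evaluating each regional solution at $s=\gamma\lambda$ then produces the three claimed forms. The constant multiplier becomes $\lambda L/((\gamma+L)\lambda)=L/(\gamma+L)$, turning the initializing range into a geometric law with parameter $\gamma/(\gamma+L)$; on the fast-evolving range the Gamma-function identity $\prod_{i=1}^{k}(\gamma+i)=\Gamma(\gamma+k+1)/\Gamma(\gamma+1)$ collapses the product to $\gamma\,\Gamma(k)\Gamma(\gamma+1)/\Gamma(\gamma+k+1)\sim k^{-(\gamma+1)}$, a power law with exponent $-(\gamma+1)$; and the maturing multiplier $U/(\gamma+U)$ yields a geometric law with parameter $\gamma/(\gamma+U)$. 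The overall constant $c$ is fixed by $\sum_k p_k=1$.

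The hard part will be pinning down $\gamma$ and justifying the range $L\le\gamma\le L+1$, since this rests on the marginal dynamics of the network size $N(t)$ obtained by summing~(\ref{eqn:DE-on-2Dpmf}) over $k$, whose growth rate is governed by the aggregate modified degree $S_n$, and $S_n$ drifts as nodes migrate through the phases. Heuristically, when $U\ll N_{\mathcal{T}}$ almost no node reaches the degree-proportional regime, the size process grows at essentially the constant initializing rate, and $\gamma\approx L$; when $U\sim N_{\mathcal{T}}$ the Yule-type fast-evolving regime dominates and $\gamma\approx L+1$, recovering the exponent $-3$ of Theorem~\ref{thm:PowerLaw}. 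I would relegate this computation to the appendix and simply cite the exponential form with rate $\gamma\lambda$. Finally, the general-initial-condition formula~(\ref{eqn:trichotomyLaw-general}) follows from linearity of the recursion exactly as in the remarks after Theorems~\ref{thm:Poisson} and~\ref{thm:PowerLaw}: each seed mass $p_i(0)$ launches a copy of the solution truncated below $k=i$, so the geometric pieces become mixtures of truncated geometrics and the superposition of truncated power laws shifts the exponent to $-(\gamma+1)-\epsilon$.
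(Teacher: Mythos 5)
Your proposal follows essentially the same route as the paper's proof: a piecewise Laplace-transform recursion solved regime by regime (geometric multipliers on $[1,\mathcal{L}]$ and $(\mathcal{U},N_{\mathcal{T}}]$, the telescoping Yule product in between), averaged against the exponential residential-time law of rate $\lambda\gamma$ from Appendix~\ref{AppendixA:residentialTimeByExtOnly}, with Gamma-function asymptotics producing the middle power law, the carried-over value at $k=\mathcal{U}$ giving $c_{PL(\mathcal{U})}$, and linearity handling the general initial condition. The only difference is bookkeeping: the paper applies a $\mathcal{T}$-truncated transform to equation~(\ref{eqn:DE-on-pmf-implicit-w-BdyEffect}) and carries the boundary and terminal error terms $\lambda\hat{k}P_{k,N_{\mathcal{T}}}(s)-e^{-s\mathcal{T}}p_k(\mathcal{T})$ explicitly before arguing they vanish, whereas you start from the already-simplified equation~(\ref{eqn:DE-on-pmf}); the computations and conclusions are otherwise identical.
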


\begin{proof}
Only the case with the most interesting initial condition, $p_1 \left(0 \right) = 1, p_k \left(0 \right) = 0$
for all \text{other} $k$, is proved here; the proof for the general case is similar.

Define the $\mathcal{T}$-truncated Laplace transform of $p_k(t)$ as follows:
\begin{equation}
P_k(s) = \int_{0}^{\mathcal{T}} e^{-s t} p_k(t) dt.
\end{equation}

Using the rule of integration-by-parts, one has
\begin{eqnarray}
\int_{0}^{\mathcal{T}} e^{-s t} \frac{d p_k(t)}{dt} dt &=& [e^{-s t} p_k(t)]|_{t=0}^{t=\mathcal{T}} + s \int_{0}^{\mathcal{T}} e^{-s t} p_k(t) dt \nonumber\\
&=& - p_k(0) + e^{-s \mathcal{T}} p_k(\mathcal{T}) + s \int_{0}^{\mathcal{T}} e^{-s t} p_k(t) dt.
\end{eqnarray}
Then, the $\mathcal{T}$-truncated Laplace transform of the differential-difference equation
(\ref{eqn:DE-on-pmf-implicit-w-BdyEffect}) is given by
\[
s{P_k}\left( s \right) - {p_k}\left( 0 \right) + e^{-s \mathcal{T}} p_k(\mathcal{T}) =
\lambda \left( {\widehat {k - 1}} \right){P_{k - 1}}\left( s \right)
- \lambda \left( {\hat {k}} \right){P_k}\left( s \right) + \lambda \left( \hat{k} \right) P_{k, N_\mathcal{T}} \left( s \right), \quad \forall k \ge 1,
\]
where $\widehat{0} \triangleq
0$, $P_{k, N_\mathcal{T}} \left( s \right)$ is defined as the $\mathcal{T}$-truncated Laplace transform of ${p_{k, N_{\mathcal{T}}}}(t)$, i.e. $P_{k, N_\mathcal{T}} \left( s \right) \triangleq \int_{0}^{\mathcal{T}} e^{-s t} {p_{k, N_{\mathcal{T}}}}(t) dt$, which is small for all $k$ since ${p_{k, N_{\mathcal{T}}}}(t)$ is small for the reason mentioned above.

Reorganizing the above equation leads to
\begin{eqnarray} \label{eqn:Laplace-Trichotomy-iteration}
{P_k}\left( s \right) &=& \left( {\frac{{\lambda \left( {\widehat {k - 1}} \right)}}
{{s + \lambda \left( {\hat {k}} \right)}}} \right){P_{k - 1}}\left( s \right) + \frac{p_k(0) + \lambda \hat{k} P_{k, N_\mathcal{T}} \left( s \right)  - e^{-s \mathcal{T}} p_k(\mathcal{T})}{{s + \lambda \left( {\hat {k}} \right)}}\nonumber\\
&=& \sum\limits_{i = 1}^{k} \left( \prod\limits_{j = 1}^{k - i}
 \left(\frac{{\lambda \widehat{\left(k - j\right)}}}{{s + \lambda
 \left( \reallywidehat{\left(k - j\right) + 1} \right)}}\right)\right)
 \left( {\frac{p_i \left(0\right) + \lambda \hat{i} P_{i, N_\mathcal{T}} \left(s \right) - e^{-s \mathcal{T}} p_i(\mathcal{T})}{{s + \lambda i}}} \right)\nonumber\\
&=& \sum\limits_{i = 1}^{k} \left( \prod\limits_{j = i}^{k - 1}
 \left(\frac{{\lambda \widehat{j}}}{{s + \lambda
 \left( \widehat{j + 1} \right)}}\right)\right) \left( {\frac{p_i \left(0\right) + \lambda \hat{i} P_{i, N_\mathcal{T}} \left(s \right) - e^{-s \mathcal{T}} p_i(\mathcal{T})}{{s + \lambda i}}} \right) \nonumber\\
 &=& \sum\limits_{i = 1}^{k} \left( \prod\limits_{j = i}^{k - 1}
 \left(\frac{{\lambda \widehat{j}}}{{s + \lambda
 \left( \widehat{j + 1} \right)}}\right)\right) \left( {\frac{\widetilde{p_i}}{{s + \lambda i}}} \right),
\end{eqnarray}
where
\begin{equation} \label{def:tilde_p}
\widetilde{p_i} \triangleq p_i \left(0\right) + \lambda \hat{i} P_{i, N_\mathcal{T}} \left(s \right) - e^{-s \mathcal{T}} p_i(\mathcal{T}).
\end{equation}
It is also noted that when $i > k -1$, one has $\prod\limits_{j = i}^{k - 1}
 \left(\frac{{\lambda \widehat{j}}}{{s + \lambda
 \left( \widehat{j + 1} \right)}}\right) = 0$.

The remaining task to obtain a closed-form expression for $P_k\left( s\right)$ is to express $\prod\limits_{j = i}^{k - 1}
 \left(\frac{{\lambda \widehat{j}}}{{s + \lambda
 \left( \widehat{j + 1} \right)}}\right)$ in terms of the system parameters explicitly for different cases of $k$, and then investigate this expression for different subcases of the corresponding iteration variable $i$ (for $i = \left\{1, \cdots k - 1\right\}$) in the summation term in equation (\ref{eqn:Laplace-Trichotomy-iteration}).

\begin{itemize}
\item Case 1. $1 \leq k \leq \mathcal{L}$, for all $i = \left\{1, \cdots k - 1\right\}$:
\[
\prod\limits_{j = i}^{k - 1}
 \left(\frac{{\lambda \widehat{j}}}{{s + \lambda
 \left( \widehat{j + 1} \right)}}\right)
 = 
\left( {\frac{{\lambda L}}{{s + \lambda L}}}
    \right)^{k - i}.
\]
\item Case 2. $\mathcal{L} < k \leq \mathcal{U} $, there are two subcases of the expressions depending on $i$:
\[
\prod\limits_{j = i}^{k - 1}
 \left(\frac{{\lambda \widehat{j}}}{{s + \lambda
 \left( \widehat{j + 1} \right)}}\right)
 = 
\left\{ {\begin{array}{*{20}{c}} \left(\frac{L}{\mathcal{L}}\right)
\prod\limits_{j = \mathcal{L}}^{k - 1} \left(\frac{{\lambda j}}{{s + \lambda
 \left( j + 1 \right)}}\right)\left( {\frac{{\lambda L}}{{s + \lambda L}}}
    \right)^{\mathcal{L} - i}, \\
\prod\limits_{j = i}^{k - 1} \left(\frac{{\lambda j}}{{s + \lambda
 \left( j + 1 \right)}}\right),
\end{array}} \right.
\begin{array}{*{20}{l}}
{{\text{if }}i \le \mathcal{L}},\\
{{\text{if }}\mathcal{L} < i \le \mathcal{U}}.
\end{array}
\]
\item Case 3. $k > \mathcal{U}$, there are three subcases of the expressions depending on $i$:
\[
\prod\limits_{j = i}^{k - 1}
 \left(\frac{{\lambda \widehat{j}}}{{s + \lambda
 \left( \widehat{j + 1} \right)}}\right)
 = 
\left\{ {\begin{array}{*{20}{c}}
\left(\frac{\mathcal{U}}{U}\right)
\left(\frac{{\lambda U}}{{s + \lambda
U }}\right)^{k - \mathcal{U}}
\left(\frac{L}{\mathcal{L}}\right)
\prod\limits_{j = \mathcal{L}}^{\mathcal{U} - 1} \left(\frac{{\lambda j}}{{s + \lambda
 \left( j + 1 \right)}}\right)\left( {\frac{{\lambda L}}{{s + \lambda L}}}
    \right)^{\mathcal{L} - i}, \\ \left(\frac{L}{\mathcal{L}}\right)
\prod\limits_{j = \mathcal{L}}^{k - 1} \left(\frac{{\lambda j}}{{s + \lambda
 \left( j + 1 \right)}}\right)\left( {\frac{{\lambda L}}{{s + \lambda L}}}
    \right)^{\mathcal{L} - i}, \\
\prod\limits_{j = i}^{k - 1} \left(\frac{{\lambda j}}{{s + \lambda
 \left( j + 1 \right)}}\right),
\end{array}} \right.
\begin{array}{*{20}{l}}
{{\text{if }}i \le \mathcal{L},}\\
{{\text{if }}\mathcal{L} < i \le \mathcal{U},}\\
{{\text{if }}i > \mathcal{U}.}
\end{array}
\]
\end{itemize}

To simplify the following analysis, in  (\ref{def:tilde_p}), under the given initial conditions (i.e., $p_1(0) = 1, p_i(0) = 0$ for all other $i$), an approximation of $\widetilde{p_1} \approx 1$ and $\widetilde{p_i} \approx 0$ for all other $i$, i.e., $\forall i = \left\{2, \cdots,  N_\mathcal{T} \right\}$, is used. The approximation error $\lambda \hat{i} P_{i, N_\mathcal{T}} \left(s \right) - e^{-s \mathcal{T}} p_i(\mathcal{T}) \approx 0$ for most values of $i$. It is because $\lambda \hat{i} P_{i, N_\mathcal{T}} \left( s \right)$ is small (and asymptotically converges to 0 as $\mathcal{T} \to \infty$) in typical settings of system parameters, specifically the upper bound $\mathcal{U}$ is typically non-trivial, i.e., $\mathcal{U} \ll N_{\mathcal{T}}$, so $\hat{i}$ is upper bounded, and thus this part of error $\lambda \hat{i} P_{i, N_\mathcal{T}} \left( s \right)$ vanishes as $\mathcal{T} \to \infty$. Besides that, the other part of error $e^{-s \mathcal{T}} p_i\left( \mathcal{T}\right)$ is also small (and asymptotically converges to 0 as $\mathcal{T} \to \infty$).

Hence, equation (\ref{eqn:Laplace-Trichotomy-iteration}) could be simplified to give the $\mathcal{T}$-truncated Laplace transform of $p_k\left(t \right)$ as follows:
\[
\begin{array}{l}
{P_k}\left( s \right) = \left\{ {\begin{array}{*{20}{c}}
{\frac{\widetilde{p_1}}{\lambda L}{{\left( {\frac{{\lambda L}}{{s + \lambda L}}}
\right)}^k}}, &{{\text{if }}1 < k \le \mathcal{L}}&{}\\
{\frac{{\widetilde{p_1}}}{{\lambda \mathcal{L}}}\mathop \prod
\limits_{i = \mathcal{L}}^{k - 1}
\left( {\frac{{\lambda \left( i \right)}}{{s + \lambda
\left( {i + 1} \right)}}} \right){{\left( {\frac{{\lambda L}}
{{s + \lambda L}}} \right)}^\mathcal{L}}}, &
{{\text{if }}\mathcal{L} < k \le \mathcal{U}}&{}\\
{\frac{{\mathcal{U} \widetilde{p_1}}}{{\lambda \mathcal{L} U }}{{\left( {\frac{{\lambda U}}
{{s + \lambda U}}} \right)}^{k - \mathcal{U}}}\mathop \prod
\limits_{i = \mathcal{L}}^{\mathcal{U} - 1}
\left( {\frac{{\lambda \left( i \right)}}{{s + \lambda
\left( {i + 1} \right)}}} \right){{\left( {\frac{{\lambda L}}
{{s + \lambda L}}} \right)}^\mathcal{L}}}, &
{{\text{if }} \mathcal{U} < k \le N_\mathcal{T}}&{}
\end{array}} \right.
\end{array}
\]

As can be seen from Appendix            \ref{AppendixA:residentialTimeByExtOnly},
the residential-time of the node * has an exponential distribution with parameter $\lambda \gamma$ and a normalization constant $\frac{1}{1 - e^{- \lambda \gamma \mathcal{T}}}$. After averaging ${p_k}\left( t
\right)$ with this residential-time distribution, one gets
\begin{eqnarray} \label{eqn:Trichotomy-pmf}
{p_k} &=& {\mathds{E}_T}\left[ {{p_k}\left( t \right)} \right| N(\mathcal{T}) = N_\mathcal{T}] =
 \int_0^\mathcal{T}  {p_k} \left( t \right) \frac{\lambda \gamma
 {e^{ - \lambda \gamma t}}}{1 - e^{- \lambda \gamma \mathcal{T}}} dt \nonumber\\
&=& \frac{\lambda \gamma}{1 - e^{- \lambda \gamma \mathcal{T}}}  P_k \left( \lambda \gamma \right)\nonumber \\
&=& \left\{ {\begin{array}{*{20}{c}}
 c \cdot {\frac{\lambda \gamma}{\lambda L}{{\left( {\frac{{\lambda L}}
 {{\lambda \gamma + \lambda L}}} \right)}^k}}, &{{\text{if }}1 \le k
 \le \mathcal{L}}&{}\\
c \cdot {\frac{{\lambda \gamma}}{{\lambda \mathcal{L}}}\mathop \prod \limits_{i = \mathcal{L}}^{k - 1}
 \left( {\frac{{\lambda \left( i \right)}}{{\lambda \gamma
 + \lambda \left( {i + 1} \right)}}}
 \right){{\left( {\frac{{\lambda L}}{{\lambda \gamma+ \lambda L}}}
 \right)}^\mathcal{L}}}, &{{\text{if }}\mathcal{L} < k \le \mathcal{U}}&{}\\
c \cdot  {\frac{{\lambda \gamma \mathcal{U}}}{{\lambda \mathcal{L} U}}{{\left( {\frac{{\lambda U}}
 {{\lambda \gamma+ \lambda U}}} \right)}^{k - \mathcal{U}}}\mathop
 \prod \limits_{i = \mathcal{L}}^{\mathcal{U} - 1}
 \left( {\frac{{\lambda \left( i \right)}}{{\lambda \gamma+ \lambda
 \left( {i + 1} \right)}}} \right){{\left( {\frac{{\lambda L}}
 {{\lambda \gamma+ \lambda L}}} \right)}^\mathcal{L}}}, &
 {{\text{if }} \mathcal{U} < k \le N_{\mathcal{T}}}
 \end{array}} \right. \nonumber \\
&=& \left\{ {\begin{array}{*{20}{c}} c \cdot 
 {\frac{\gamma}{\gamma + L}{{\left( {\frac{{L}}{{\gamma + L}}}
 \right)}^{k - 1}}}, &{{\text{if }} 1 \le k \le \mathcal{L}}&{}\\
 c \cdot \frac{{\gamma}}{{L}}
 \frac{\frac{{\left( {k - 1} \right)!}}{{\left( {\mathcal{L} - 1}
 \right)!}}}{\frac{\Gamma \left(k + \gamma + 1 \right)}
 {\Gamma \left(\gamma + \mathcal{L} + 1 \right)}}{{\left( {\frac{{L}}
 {{\gamma + L}}} \right)}^\mathcal{L}}, &{{\text{if }}
 \mathcal{L} < k \le \mathcal{U}}\\
c \cdot {\frac{{\gamma}}{{\gamma + U}}{{\left( {\frac{{U}}{{\gamma+ U}}}
 \right)}^{k - \mathcal{U} - 1}} \left( \frac{\mathcal{U}}{\gamma}\right) \left( \frac{\gamma}{L}\right)
 \frac{\frac{{\left( {\mathcal{U} - 1} \right)!}}
 {{\left( {\mathcal{L} - 1} \right)!}}}{\frac{\Gamma
 \left(\mathcal{U} + \gamma + 1 \right)}{\Gamma
 \left(\gamma + \mathcal{L} + 1 \right)}}{{\left( {\frac{{L}}
 {{\gamma + L}}} \right)}^\mathcal{L}}}, &{{\text{if }} \mathcal{U} < k \le N_{\mathcal{T}}}
 \end{array}} \right.,
\end{eqnarray}
where $c$ is the normalization constant such that the sum of the probabilities is 1.

Based on the asymptotic property of the Gamma function, i.e.,
$\lim_{k \to \infty} \frac{\Gamma \left(k \right) k^{\left( \gamma
+ 1 \right)}}{\Gamma \left(k + \gamma + 1 \right)} = 1$, one obtains a
power law with exponent $- \left(\gamma + 1\right)$ for
${\mathcal{L} < k \le \mathcal{U}}$, so one can  obtain the expression for the degree distribution as stated in the theorem.

\end{proof}

It is noted that there is an approximation error term in $p_k$, for very large $k$ (corresponding to the approximation error term in $P_k \left(s \right)$ for very large $k$, i.e., $\lambda \hat{k} P_{k, N_\mathcal{T}} \left(s \right) - e^{-s \mathcal{T}} p_k(\mathcal{T})$), it is $\lambda \hat{k} P_{k, N_\mathcal{T}} \left(\lambda \gamma \right) - e^{-\lambda \gamma \mathcal{T}} p_i(\mathcal{T})$. It will cause some small discrepancies in observed probability density against the theoretical expression in equation (\ref{eqn:Trichotomy-pmf}) for very large $k$, which is usually called "node dynamics" in the existing literature for the power-law case. More details on this node-dynamics will be discussed in Section \ref{sec:sim} below.

\section{Simulations}
\label{sec:sim}

\subsection{Simulation Setting}
\label{subsec:sim-setting}

The node-degree distribution of the MC model has been simulated, for a network of $N_{\mathcal{T}} = 100,000$ nodes. The simulation is set up according to the model as follows: each time when a new node comes, it will connect to an existing node according to the MC model. At the $100,000$-th time unit, there are $100,000$ nodes and the empirical node-degree distribution is reported. The only exception is the simulation for the Poisson model, where there are $50,000$ nodes of degree 0 at the beginning. At the $t$-th time unit, the $(50,000+t)$-th node comes and connects to the existing nodes according to the MC model. At the $50,000$-th time step, there are $100,000$ nodes and the distribution of the first $50,000$ nodes is reported for the Poisson model simulation. Every simulation is repeated for 100 times, and the average result is reported.

\subsection{Major Simulation Results}
\label{subsec:sim-major-results}
First, by setting $L = \mathcal{L} = U = \mathcal{U} = 1$ in the MC model (for accounting node-degree distribution of the  fixed $50,000$ nodes) it reduces to the Poisson model, and the results are shown in Fig. \ref{fig:sim}(a); when excluding isolated nodes, it reduces to the exponential network model, with results shown in Fig. \ref{fig:sim}(b). Second, by setting $L = \mathcal{L} = 1$ and $U = \mathcal{U} = N$, the MC model reduces to the BA model, and the results are shown in Fig. \ref{fig:sim}(c).

Finally, to show a general scenario, by setting $L = \mathcal{L} = 2$ and $U = \mathcal{U} = 8$, the MC model generates the trichotomy
distribution shown in Fig. \ref{fig:sim}(d), i.e. a power-law distribution with exponential head and tail. Several characteristics of the node-degree distribution plot matches with those predicted by theory. The exponent of the power-law region, $-\left(\gamma + 1\right)$, depends on $L$: specifically, $L \leq \gamma \leq L + 1$, particularly for small $U$ as in this case, $\gamma \approx L$. In Fig. \ref{fig:sim}(d), $L = 2$, which matches with the observed exponent $\gamma + 1 = 3$. It is also observed that the head part is geometrically distributed with parameter $0.6$ and the tail part is geometrically distributed with parameter $0.27$.

As a second illustration, the same predictions also hold when $L = \mathcal{L} = 3$ and $U = \mathcal{U} = 10$, as shown in Fig. \ref{fig:sim}(e). Referring to the middle range of \ref{eqn:trichotomyLaw-special}, and $\gamma \approx L$ (for $\mathcal{U} \ll N$), the theory predicts that the slope is $- \left(L + 1\right)$ and it matches with the observed exponent $-4$ in Fig. \ref{fig:sim}(e). It is also observed that the head part is geometrically distributed with parameter value $0.53$ and the tail part is geometrically distributed with parameter value $0.25$, as predicted by the theory.

\subsection{Discussions on node dynamics via Simulation Results}

Each simulation curve presented above is an averaging of the empirical distribution results over $M = 100$ simulation runs, where the goal of these multiple simulation runs is to reduce the variation in the empirical distribution plot, particularly reduce the variation in the tail part, so that the main characteristic of the result can be observed more easily. In the following discussion, the question, why and to what extent a large value of $M$ can reduce the variance, will be addressed. In fact, in one simulation run, suppose $k_i$ are independent for all $i$, according to the law of large numbers, $\frac{1}{N_{\mathcal{T}}} \sum\limits_{i = 1}^{N_{\mathcal{T}}} \mathds{1} \left\{k_i = k\right\} \overset{a.s.}{\to} \mathds{E} \left(\mathds{1} \left\{k_i = k\right\} \right) = p_k$ as $N_{\mathcal{T}} \to \infty$ (where $\mathds{1}\left\{\cdot\right\}$ is an indicator function), i.e., the empirical distribution will converge almost surely (a.s.) to the true degree distribution, which is approximately true in the above setting, i.e. $N_{\mathcal{T}} = 100000$. To avoid possible complications of dependence, the setting on $M$ ($M = 100$) is already large enough to ensure a small variation for most parts of the curve (i.e., for most values of $k$), even when $N_{\mathcal{T}}$ is not large, as $M$ independent simulation runs are performed. However, there is still a significant variation in the tail part (when $k$ is very large) of the empirical distribution observed from the simulation, especially when $U$ is large. For example, $U = \infty$ yields the BA model, but from the proposed MC model, there is significant variation in the tail part as shown in Fig. \ref{fig:sim}(f) even if the simulation runs 100 times. These tail variations in BA model are usually called the "node dynamics" in the existing literature, which have not been theoretically addressed. In the proposed MC model, it corresponds to the neglected boundary term $\lambda \hat{k} p_{k, N_T} (t)$ in equation (\ref{eqn:DE-on-pmf-implicit-w-BdyEffect}), to obtain the degree dynamics in equation (\ref{eqn:DE-on-pmf-implicit}), which causes an approximation error $\lambda \hat{k} P_{k, N_\mathcal{T}} \left(\lambda \gamma \right) - e^{-\lambda \gamma \mathcal{T}} p_k(\mathcal{T})$ as shown in the proof of Theorem \ref{thm:trichotomyLaw}. Hence, the larger this term $\lambda \hat{k} P_{k, N_\mathcal{T}} \left(\lambda \gamma \right)$ is, the more discrepancy it causes to the approximation formula and the empirical curves. As a result, from the proposed MC model viewpoint, the node dynamics under a larger $U$ are more complex, as $\hat{k}$ can increase further with a looser upper bound causing more discrepancy (or dynamics), while the node dynamics for the model with a smaller $U$ is less complex, as $\hat{k}$ cannot increase too far due to the tighter upper bound causing less discrepancy (or dynamics). This prediction is verified in the simulation as shown in Fig. \ref{fig:sim}(e) (where $L = 3, U = 10$) and Fig. \ref{fig:sim}(f) (where $L = 3, U = N$).

\begin{figure}[htbp!]
  \centering
  \subfigure[$L=U=1$, accounting isolated nodes]{
    \includegraphics[width=\figurewidthC]{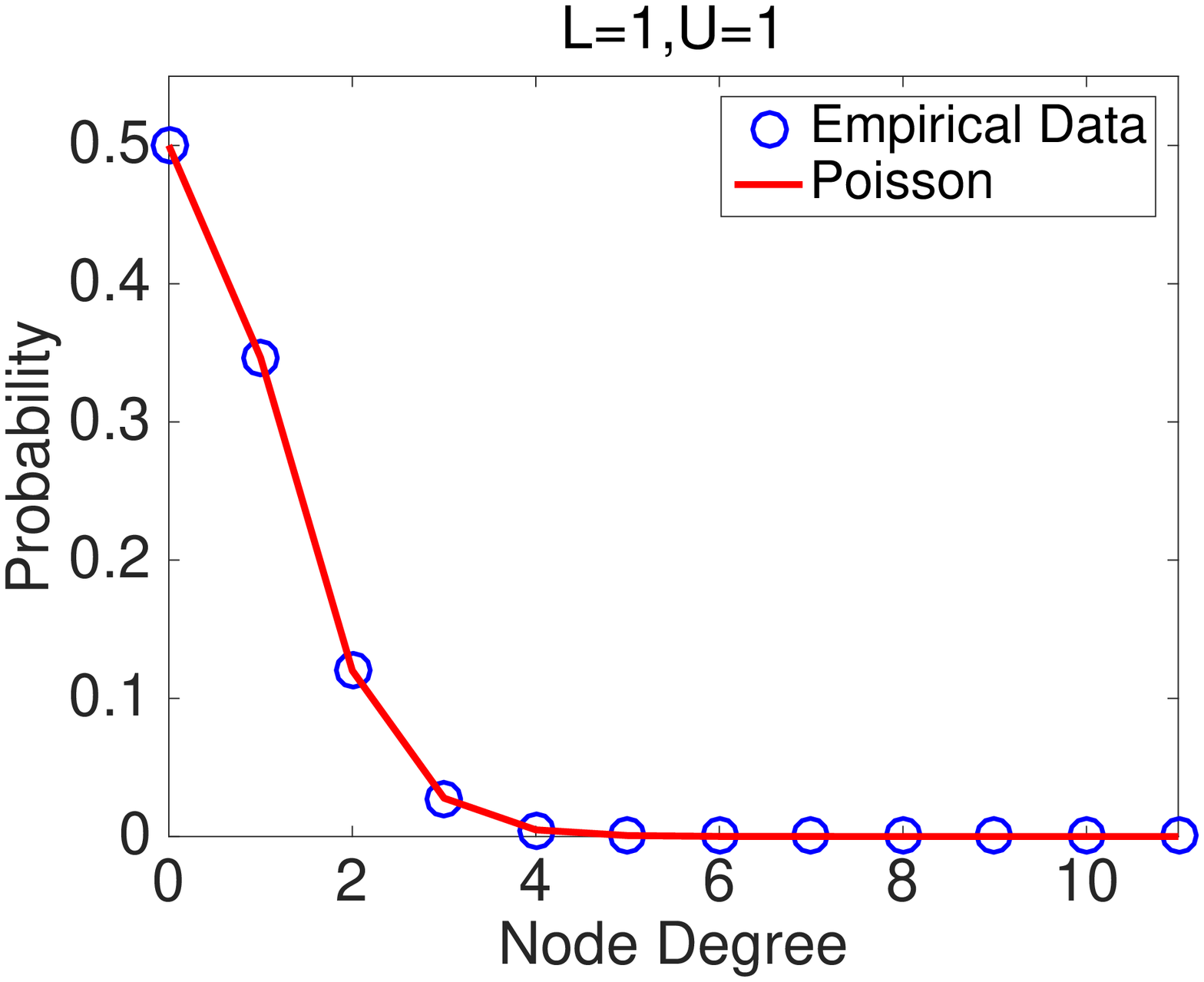}
  }
  \subfigure[$L=U=1$, excluding isolated nodes]{
    \includegraphics[width=\figurewidthC]{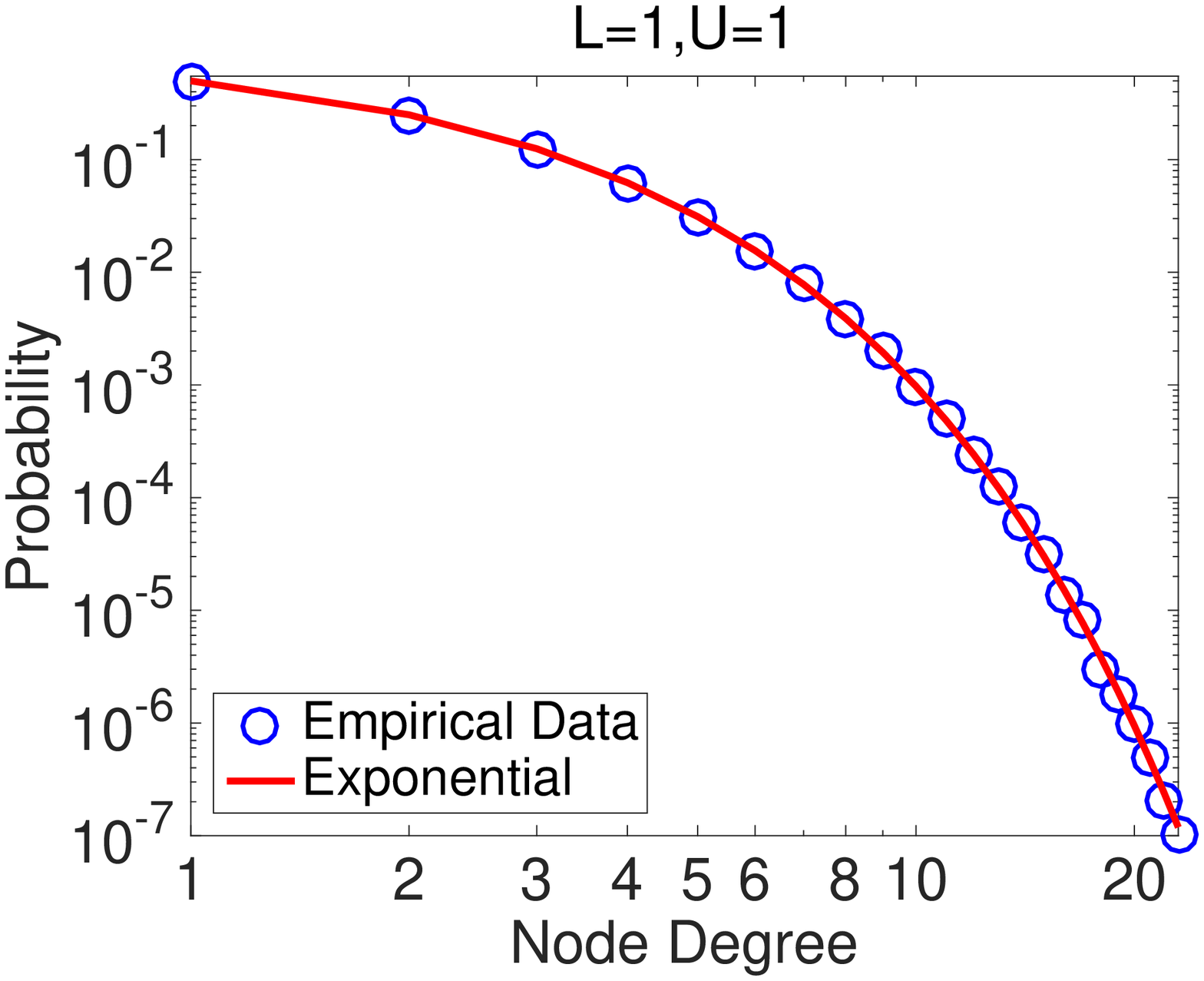}
  }
  \subfigure[$L=1$ and $U=N$]{
    \includegraphics[width=\figurewidthC]{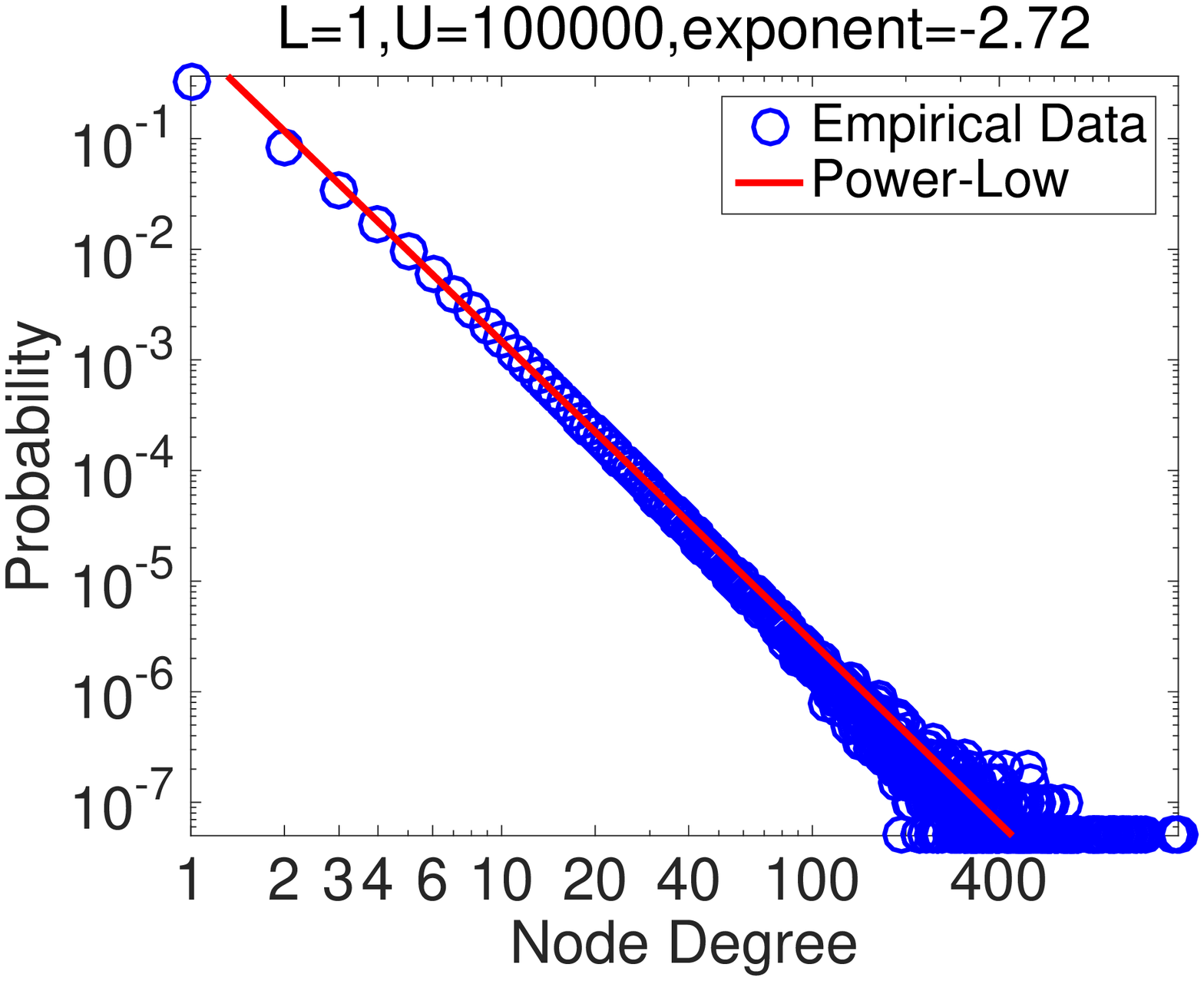}
  }
  \subfigure[$L=2$ and $U=8$]{
    \includegraphics[width=\figurewidthC]{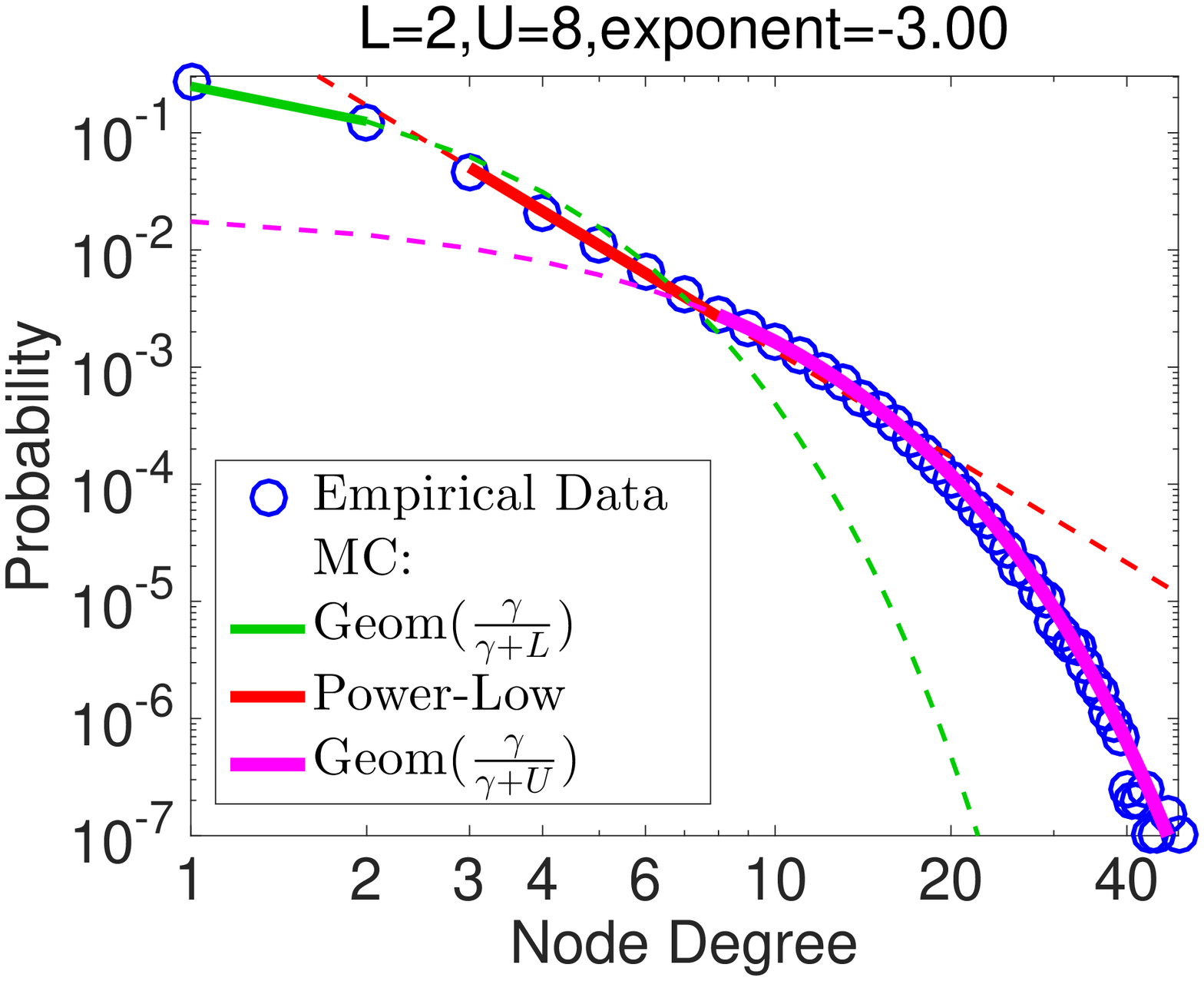}
  }
  \subfigure[$L=3$, $U=10$]{
    \includegraphics[width=\figurewidthC]{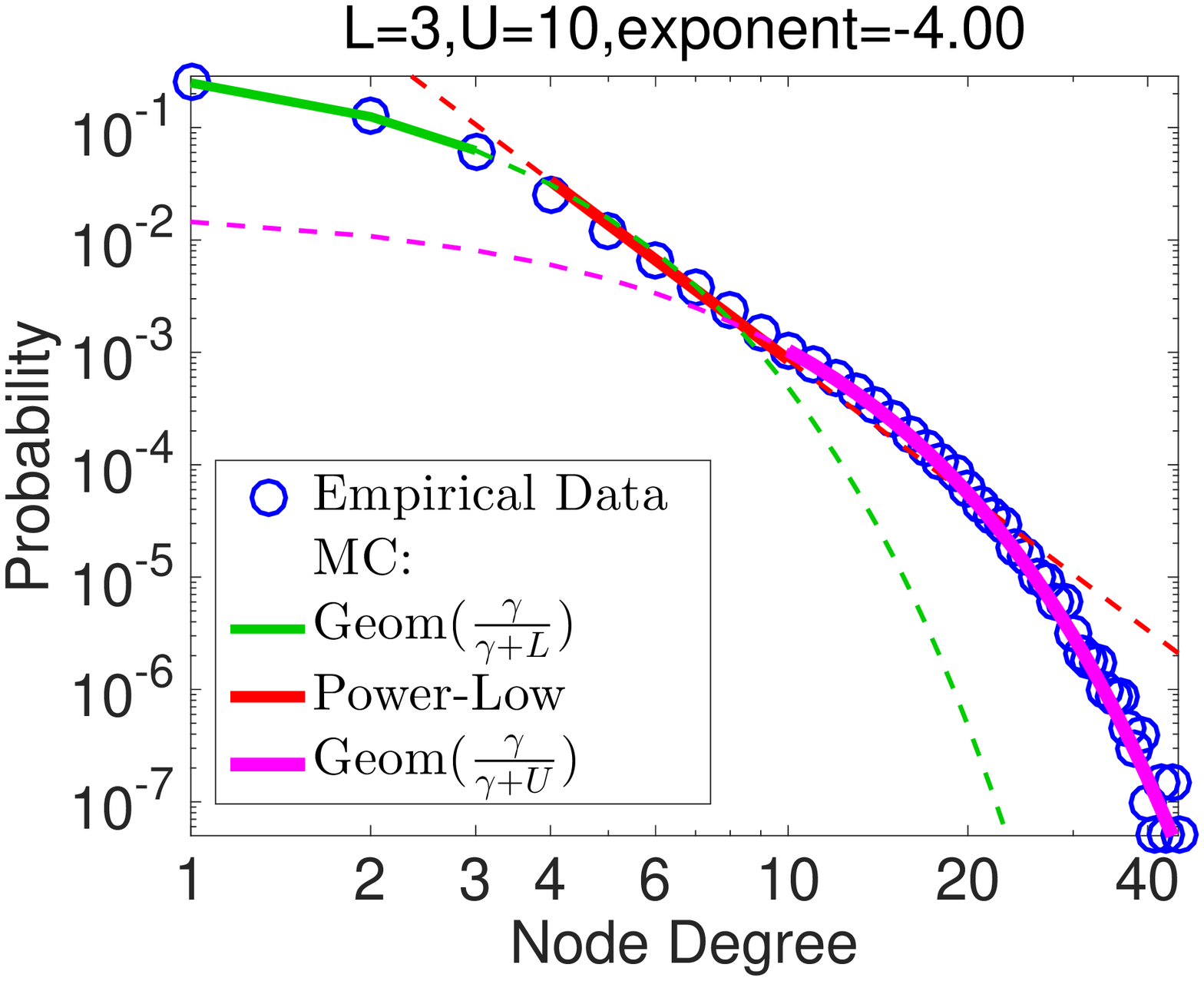}
  }
  \subfigure[$L=3$, $U=N$]{
    \includegraphics[width=\figurewidthC]{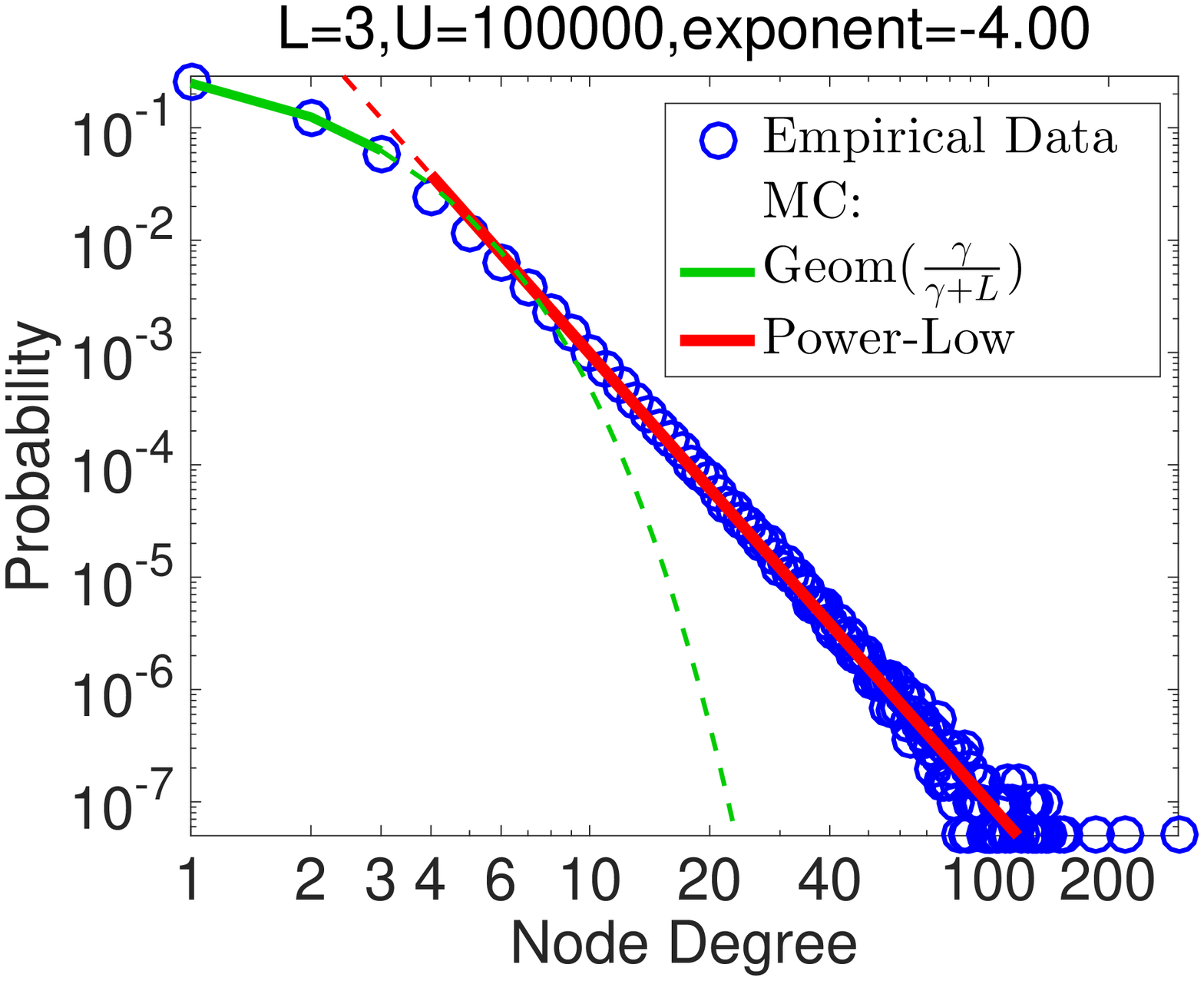}
  }
  \caption{Simulation results of node-degree distribution on the MC model with varying $L$ and $U$.}
  \label{fig:sim}
\end{figure}

\section{Fitting Methodology}
\label{sec:fitting}

A heuristic method is proposed to fit the empirical data. In the following, $N_{\mathcal{T}}$ denotes the number of data points.

\para{Step 1. Fit the fast-evolving phase} (i.e., the phase exhibiting power-law behaviour):
We first fit the fast-evolving phase. Given initial values $\mathcal{L}=L_0$ and $\mathcal{U}=U_0$, we apply least-squares fitting to fit the segment $[\mathcal{L}, \mathcal{U}]$ with a power-law $p_{phase2} = a \times k^{-\gamma}$, where $a$ is a coefficient and $-\gamma$ is the exponent. We use an iterative algorithm where $\mathcal{L}$ and $\mathcal{U}$ are shifted gradually until we find a set of $\mathcal{L}$ and $\mathcal{U}$ leading to the minimal mean square error. To be more specific, we first fix $\mathcal{U}$ while reducing $\mathcal{L}$ by $1$ and calculate the mean square error of the new segment in each iteration. The process ends when the mean square error is not reduced anymore or $\mathcal{L}=1$. Similarly, we fix $\mathcal{U}$ while increasing $\mathcal{L}$ by $1$ and calculating mean square errors iteratively. By comparing the mean square errors while reducing and increasing $\mathcal{L}$, we can find the best $\mathcal{L}$ with the minimal mean square error. We apply the same process to reduce and increase $\mathcal{U}$ while fixing $\mathcal{L}$, and find the best $\mathcal{U}$ with the minimal mean square error. Note that, although this is a heuristic method which could be vulnerable to noise and only find a local optimal set of $\mathcal{L}$ and $\mathcal{U}$, our evaluation in Section \ref{sec:exp} shows that the method works reasonably good for various types of real-world datasets.

\para{Step 2. Fitting the initializing phase} (i.e., the phase exhibiting \textit{geom}($\frac{\gamma}{L+\gamma}$) distribution):
With the exponent found in Step 1, we use the derived closed-form formula to fit the initializing phase. Since it is usually unclear how many truncated geometric distributions are combined in this phase, we use only two truncated geometric distributions (i.e., the maximum number of head parameter is set to 1). According to the equation, $p_{phase1} = p_1^0 \times p_a \times (1-p_a)^{k-1} \mathds{1}_{\left\{k \geq 1\right\}} + \left(1 - p_1^0\right) \times p_{a} \times (1-p_{a})^{k-2} \mathds{1}_{\left\{k \geq 2\right\}}$, where $p_{a} = \gamma/(L+\gamma)$. We then apply least-squares fitting to find the best $p_1^0$ so as to fit the initializing phase segment $[1, \mathcal{L}]$. For the general case of fitting the initializing phase (with a larger maximum number of head parameters), it is referred to Fig. \ref{fig:fitting-headmethod}.

\para{Step 3. Fitting the maturing phase} (i.e., the phase exhibiting the \textit{geom}($\frac{\gamma}{U+\gamma}$) distribution):
Finally, we fit the maturing phase by the derived closed-form formula $p_{phase3} = c \times p_{b} \times (1-p_{b})^{k-1}$, where $c$ is the coefficient and $p_{b} = \gamma/(U+\gamma)$. We again use least-squares fitting to find the best $c$ and fit the maturing phase segment $[\mathcal{U}, N_{\mathcal{T}}]$.

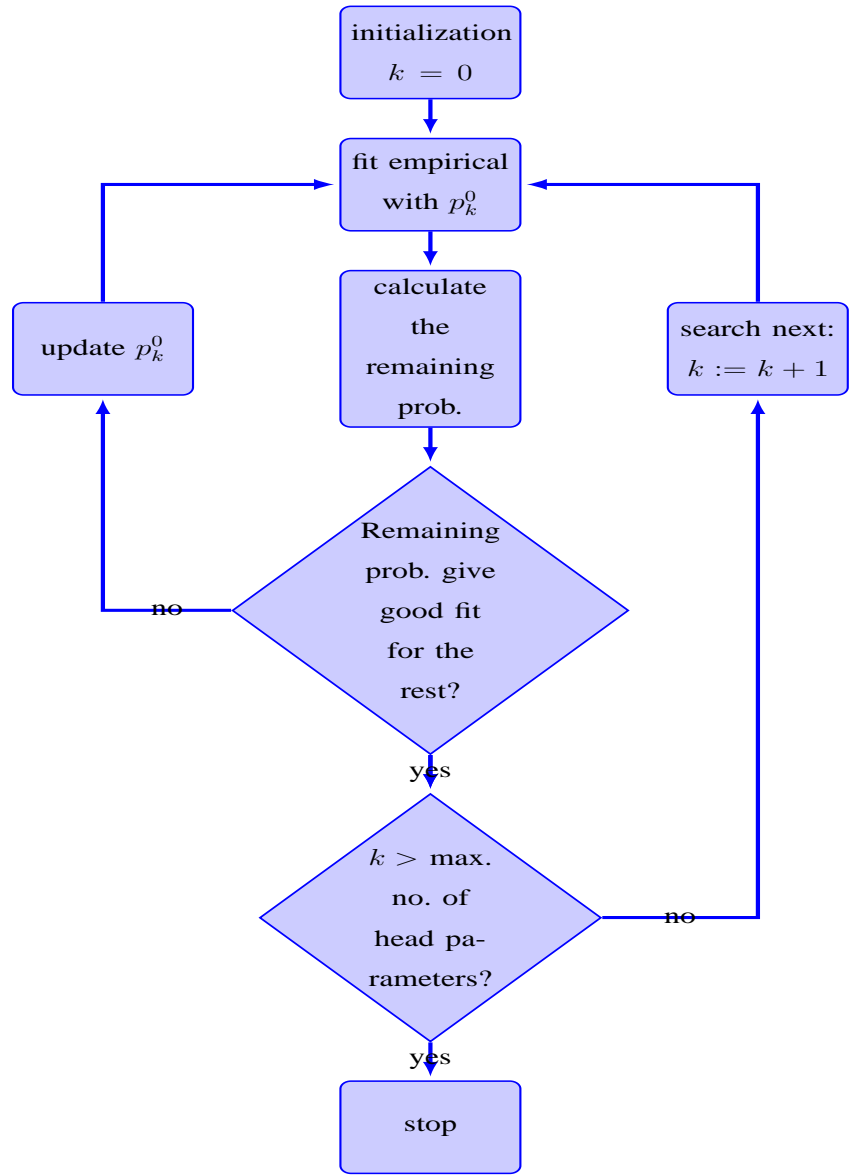
\begin{figure}[htbp!]
	\centering
\resizebox{0.7\textwidth}{0.7\textheight}{%
\begin{tikzpicture}
[
	%
	DecisionStyle/.style =
	{
		diamond,
		draw		= blue,
		thick,
		fill		= blue!20,
		text width	= 4.5em,
		align		= flush center,
		inner sep	= 1pt
	},
	BlockStyle/.style =
	{
		rectangle,
		draw			= blue,
		thick,
		fill			= blue!20,
		text width		= 5em,
		align			= center,
		rounded corners,
		minimum height	= 4em
	},
	CloudLineStyle/.style =
	{
		draw,
		ultra thick,
		color	= red,
		-latex,
		shorten	>= 2pt,
		dotted
	},
	BlockLineStyle/.style =
	{
		draw,
		ultra thick,
		color	= blue,
		-latex,
		shorten	>= 2pt
	},
	CloudStyle/.style =
	{
		draw			= red,
		thick,
		ellipse,
		fill			= red!20,
		minimum height	= 2em
	}
]

\matrix [column sep = 5mm, row sep = 7mm, ampersand replacement=\&]
{
                                                                               \&
\node [BlockStyle] (init)	{initialization $k = 0$};	\&
                                                                               \\
%
															\&
\node [BlockStyle] (identify)	{fit empirical with $p_k^0$};	\&
															\\
%
\node [BlockStyle] (update)		{update $p_k^0$};					\&
\node [BlockStyle] (evaluate)	{calculate the remaining prob.};	\&
\node [BlockStyle] (search_next)   	{search next: \\ $k := k +1$};		\\
%
																\&
\node [DecisionStyle] (decideGoodFit)	{Remaining prob. give good fit for the rest?};		
               \&
																\\
												
%
																\&
\node [DecisionStyle] (decideFinish)	{$k >$  max. no. of head parameters?};		
               \&
               				\\
																\&
\node [BlockStyle] (stop)	{stop};								\&
																\\
}; 

\draw	[BlockLineStyle]	(init)		-- (identify);
\draw	[BlockLineStyle]	(identify)	-- (evaluate);
\draw	[BlockLineStyle]	(evaluate)	-- (decideGoodFit);
\draw	[BlockLineStyle]	(decideGoodFit)	-- (decideFinish);
\draw	[BlockLineStyle]	(update)	|- (identify);
\draw	[BlockLineStyle]	(search_next) |- (identify);
\draw	[BlockLineStyle]	(decideGoodFit)	-| (update)		node [near start, color = black]	{no} ;
\draw	[BlockLineStyle]	(decideGoodFit)	-- (decideFinish)		node [midway, color = black] {yes};
\draw	[BlockLineStyle]	(decideFinish)	-| (search_next)		node [near start, color = black]	{no} ;
\draw	[BlockLineStyle]	(decideFinish)	-- (stop)		node [midway, color = black]		{yes} ;

\end{tikzpicture}
}
 		\caption{The method of fitting the head part of the distribution, i.e. the initializing phase}
 		\label{fig:fitting-headmethod}
\end{figure}


\section{Experimental Data}
\label{sec:exp}

We use nine datasets from citation networks, social networks and vehicular networks to verify our model. 

\subsection{Citation Networks}
\begin{table}[htb]
  \centering
  {\small
    \begin{tabular}{|c|c|c|c|}
    \hline
      Dataset & Date & \#Nodes \\\hline
      DBLP Citation \cite{tang:kdd08:citation} & 1995-2014 & $2,146,341$ papers \\\hline
      APS Citation \cite{aps-cit} & 1893-2013 & $531,478$ papers \\\hline
      US Patent Citation \cite{leskovec:kdd05:us-patent-cit} & 1975-1999 & $3,774,768$ patents \\\hline
    \end{tabular}
  }
  \vspace{1pt}
  \caption{Citation Datasets.}
  \label{tab:citation-data}
  \vspace{-30pt}
\end{table}

Table~\ref{tab:citation-data} shows three datasets from scientific publication citations and patent citations \cite{tang:kdd08:citation,aps-cit,leskovec:kdd05:us-patent-cit}, which are used in our analysis. We investigate the distribution of the number of citations for each dataset. 

\para{Physical Meaning: }
In citation networks, papers are presented as nodes. When a paper $p1$ cites another paper $p2$, a link is built between $p1$ and $p2$ and the node degree of each paper increases by $1$. When there is only a few papers in a field, a new paper is likely to cite any of them at random. When the publication number grows, some of them become \textit{famous} for having major findings or presenting the state-of-the-art results (\ie, papers with citations $> \mathcal{L}$). New papers may cite these famous papers with a higher probability. For those papers that are highly reputable (\ie, papers with citations $> \mathcal{U}$), a new paper may cite any of them with the same probability because they are all important. A larger $\mathcal{L}$ in a research area means that a paper needs a larger number of citations before starting to draw significant attention, and a larger $\mathcal{U}$ means that the study in the area may have been very popular. The interval between $L$ and $U$ can represent the amount of work needed to be done to explore, verify, or extend a work so as to make it become one of the most reputed papers in its field.

\begin{figure}[htbp!]
  \centering
  \vspace{-20pt}
  \subfigure[DBLP]{
    \includegraphics[width=\figurewidthD]{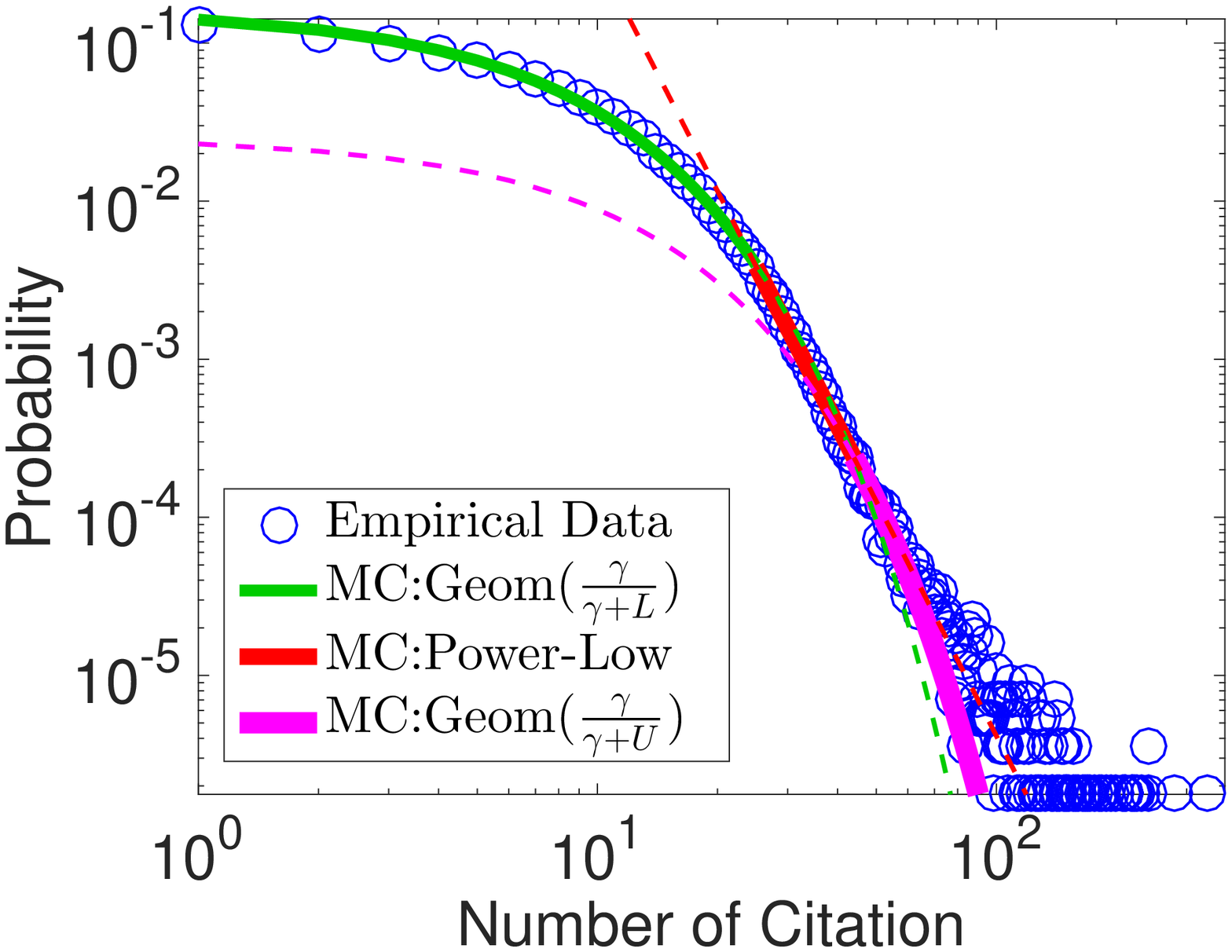}
  }
  \subfigure[DBLP: Networking]{
    \includegraphics[width=\figurewidthD]{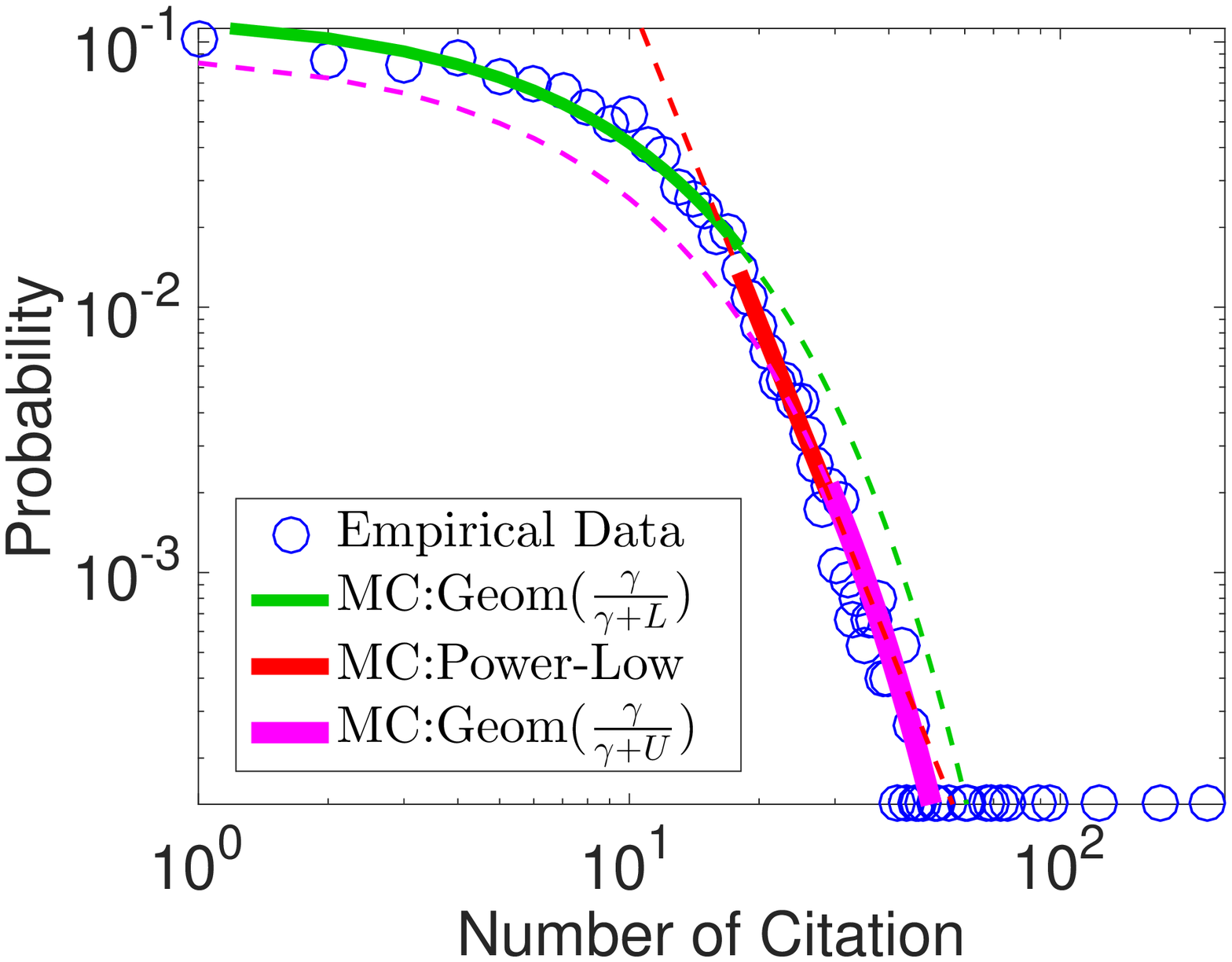}
  }
  \subfigure[APS]{
    \includegraphics[width=\figurewidthD]{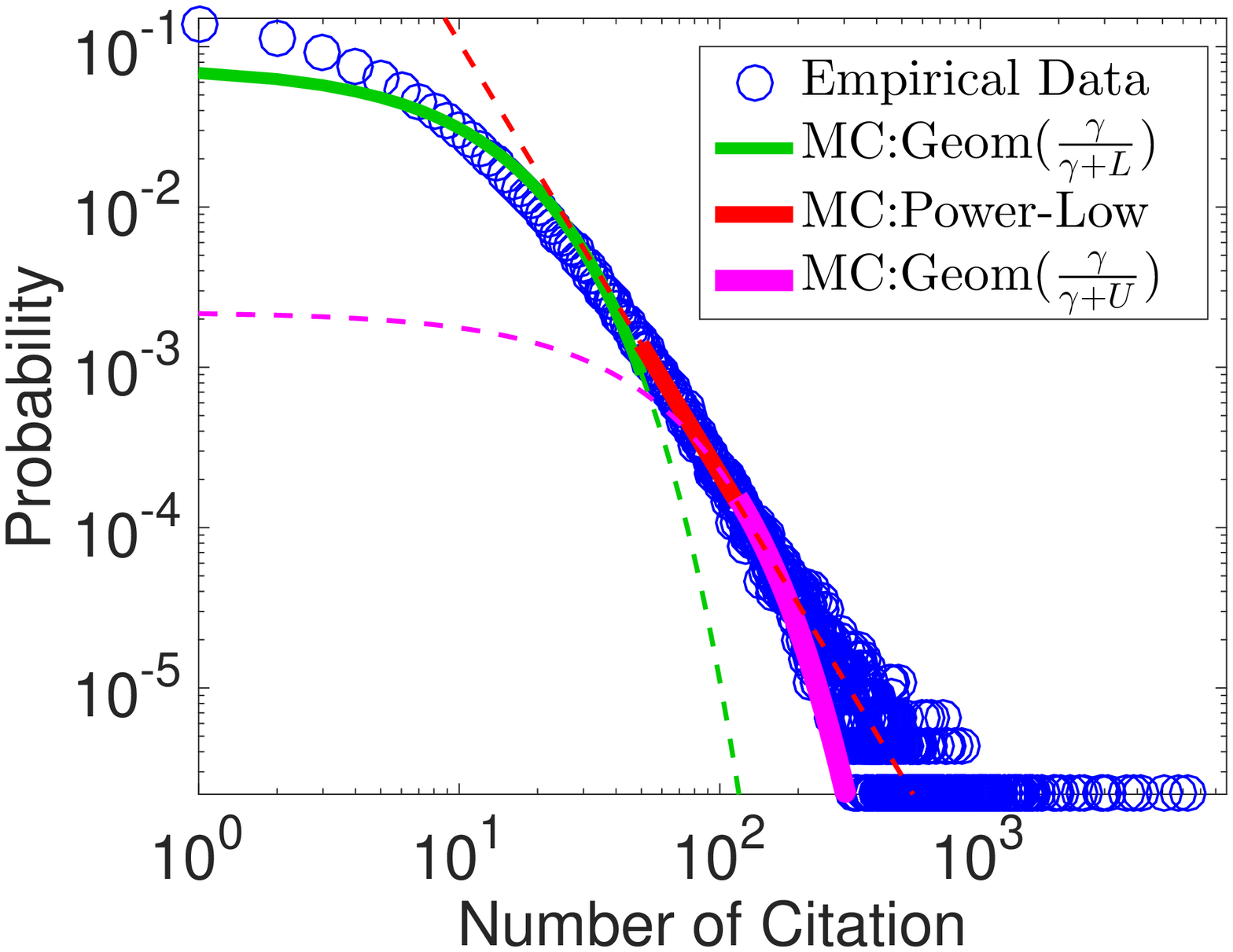}
  }
  \subfigure[US Patent]{
    \includegraphics[width=\figurewidthD]{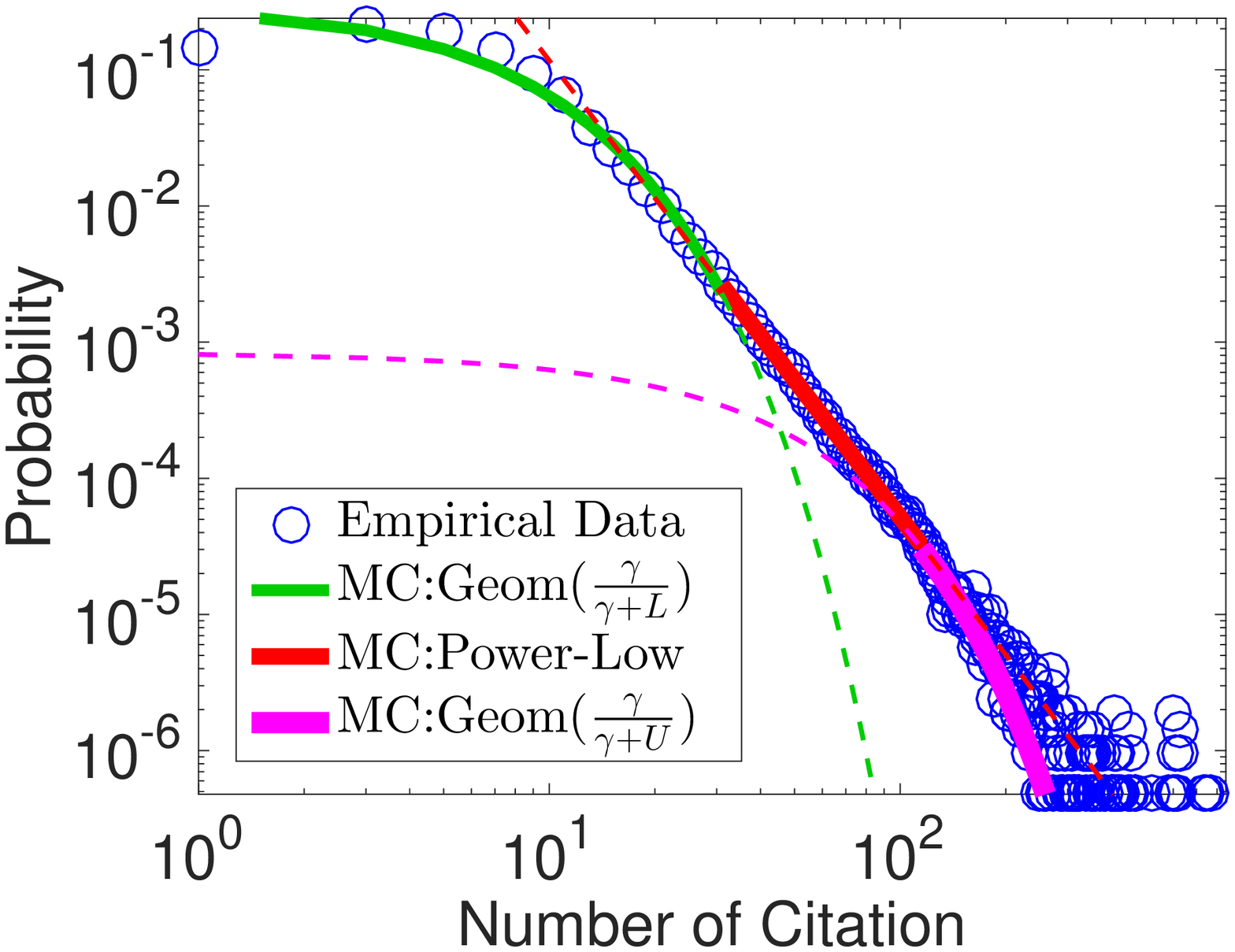}
  }
  \caption{Citation probability $P(x)$ versus the number of citations $x$ on a double logarithmic scale.}
  \label{fig:citation}
\end{figure}

\begin{table}[htb]
  \centering
  {\small
    \begin{tabular}{|c|c|c|c|c|c|}
    \hline
      Dataset & L & U & exponent & RMSE: ours & RMSE: pl \\\hline
      DBLP & $25$ & $44$ & $-4.93$ & $1.41 \times 10^{-4}$ & $92.28$ \\\hline
      Networking & $18$ & $29$ & $-4.05$ & $4.72 \times 10^{-4}$ & 8.05 \\\hline
      APS & $50$ & $117$ & $-2.69$ & $4.3 \times 10^{-5}$ & $0.01$ \\\hline
      US Patent & $31$ & $115$ & $-3.35$ & $6.94 \times 10^{-4}$ & $0.70$ \\\hline
    \end{tabular}
  }
  \vspace{1pt}
  \caption{Fitting parameters and errors in citation datasets. $RMSE: ours$ is the fitting error of our model and $RMSE: pl$ is the fitting error of using Power-Law only.}
  \label{tab:citation-param}
  \vspace{-30pt}
\end{table}

\para{Empirical Data and Fitting Results: }
Fig.~\ref{fig:citation} shows the distributions of citations for DBLP Computer Science publications, American Physical Society (APS) publications, and US patents datasets. Since preferential attachment can be best possible to model citations within one field of research \cite{medo:prl:powerlaw_decay}, we conduct our analysis in a subset of papers about Computer Networks in DBLP datasets as shown in Fig.~\ref{fig:citation}(b). We observe similar behavior, where curves are separated into three segments corresponding to the three phases in the MC model, respectively. 

We compare our MC model with the one which  uses only power-law to fit the data. The fitting error is quantized using root-mean-square error (RMSE = $\sqrt{\sum_{t=1}^{n}(\hat{y}-y)^2/n}$). The fitting parameters and errors are shown in Table~\ref{tab:citation-param}. One can see that our MC model reduces the fitting errors by more than 99\% in all datasets.

\para{Utility: }
The citation number is a common criterion used to evaluate one aspect of a scientist. A larger citation number of a scientist means that he or she has published more attractive works. However, using citation number to compare across different research areas is often misleading. For example, the citation number of an important paper in a young research area may be considered small in a well-studied area. Even in the same field, twice citation number doesn't mean twice importance because the distribution of citation numbers is not linear. Here, we use the MC model to evaluate the citation networks. Given a citation number $k$, by comparing $k$ with $\mathcal{L}$ and $\mathcal{U}$ in the area, one can estimate if the scientist is junior ($k<\mathcal{L}$), senior ($\mathcal{L}<k<\mathcal{U}$), or an expert ($\mathcal{U}<k$). To compare across areas, one can also derive the frequency of the work (i.e., the number of papers with the same level of importance over the total number of papers) in the area by citation numbers. Since the proposed method accounts for different distributions in different areas, the comparison is fairer than simply using citation numbers alone.


\subsection{Social Networks}

\begin{table}[htb]
  \centering
  {\small
    \begin{tabular}{|c|c|c|c|}
    \hline
      Dataset & Date & \# of Nodes \\\hline
      DBLP Coauthor \cite{tang:kdd08:citation} & 1995-2014 & $1,234,706$ authors \\\hline
      Facebook Friendship \cite{snapnets,mcauley:tkdd14:social} & 2012 & $4,039$ users \\\hline
      Twitter Friendship \cite{snapnets,leskovec:nips12:tiwtter} & 2012 & $81,306$ users \\\hline
    \end{tabular}
  }
  \vspace{1pt}
  \caption{Social Networks Datasets.}
  \label{tab:social-data}
\end{table}

Table~\ref{tab:social-data} shows three social network datasets \cite{tang:kdd08:citation,mcauley:tkdd14:social,leskovec:nips12:tiwtter}, which are used in our verification and analysis.  

\para{Physical Meaning: }
In social networks, people are considered as nodes. When a person $p1$ coauthors a paper with another person $p2$, adds $p2$ as a friend on Facebook, or follows $p2$ on Twitter, a link is built between $p1$ and $p2$ and the node degrees of both persons increase by $1$. To further explain the lower bound $L$ and the upper bound $U$ in the new MC model, take the coauthorship network as an example. When there is no \textit{famous scientist} (i.e., a node with high degree), each scientist likely chooses to work with others with an equal probability. When there are more and more scientists in the network, a new scientist has a higher probability to choose to work with those having better reputation (i.e., nodes with degrees $>L$). For those who are already very famous (i.e., nodes with degrees $>U$), a new scientist may choose to work with any of them with an equal probability because they all have a \rm{good enough} reputation. It is a similar process for friendships in Facebook and Twitter.

\para{Empirical Data and Fitting Results: }

\begin{figure}[htbp!]
  \centering
  \subfigure[DBLP]{
    \includegraphics[width=\figurewidthD]{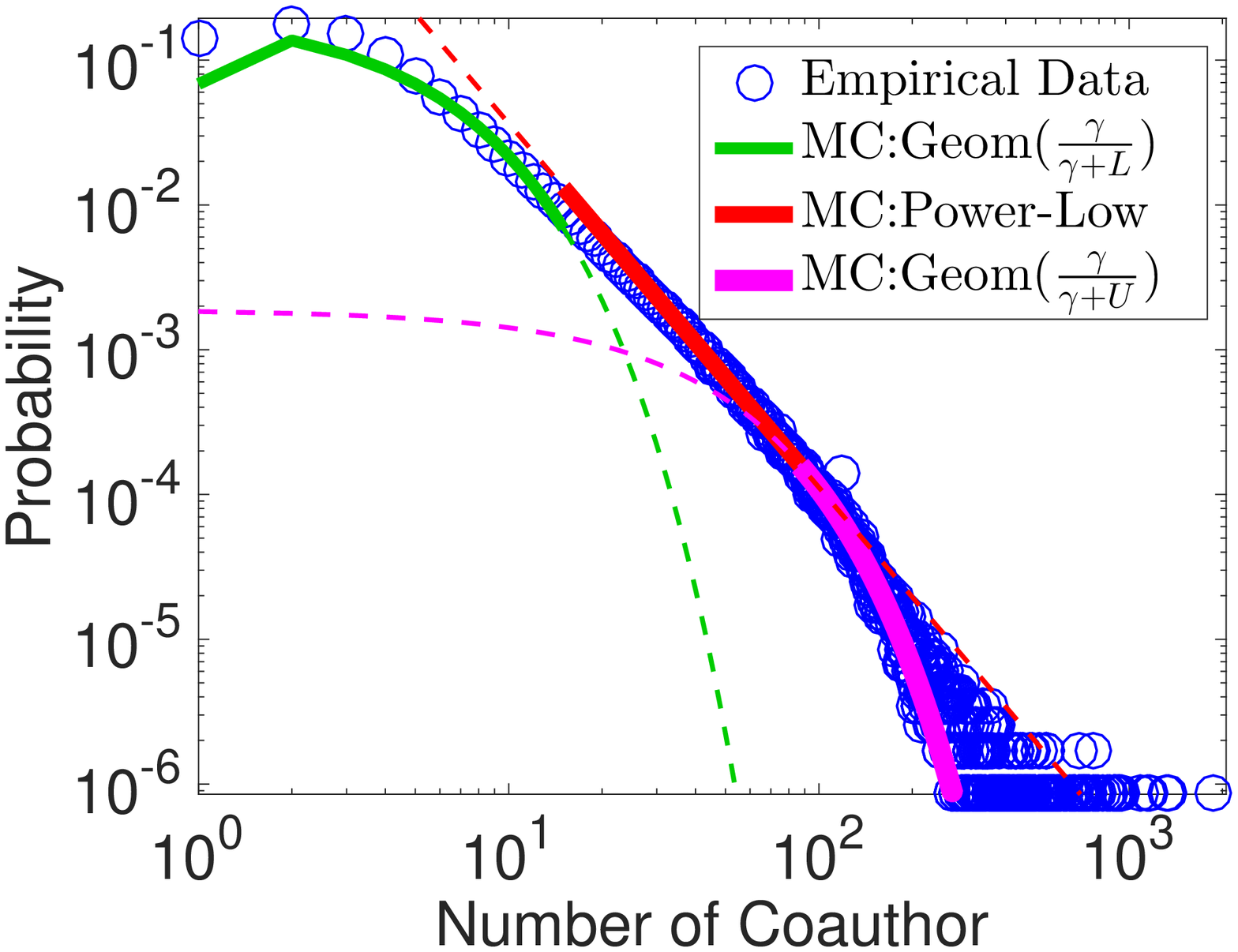}
  }
  \subfigure[DBLP: Networking]{
    \includegraphics[width=\figurewidthD]{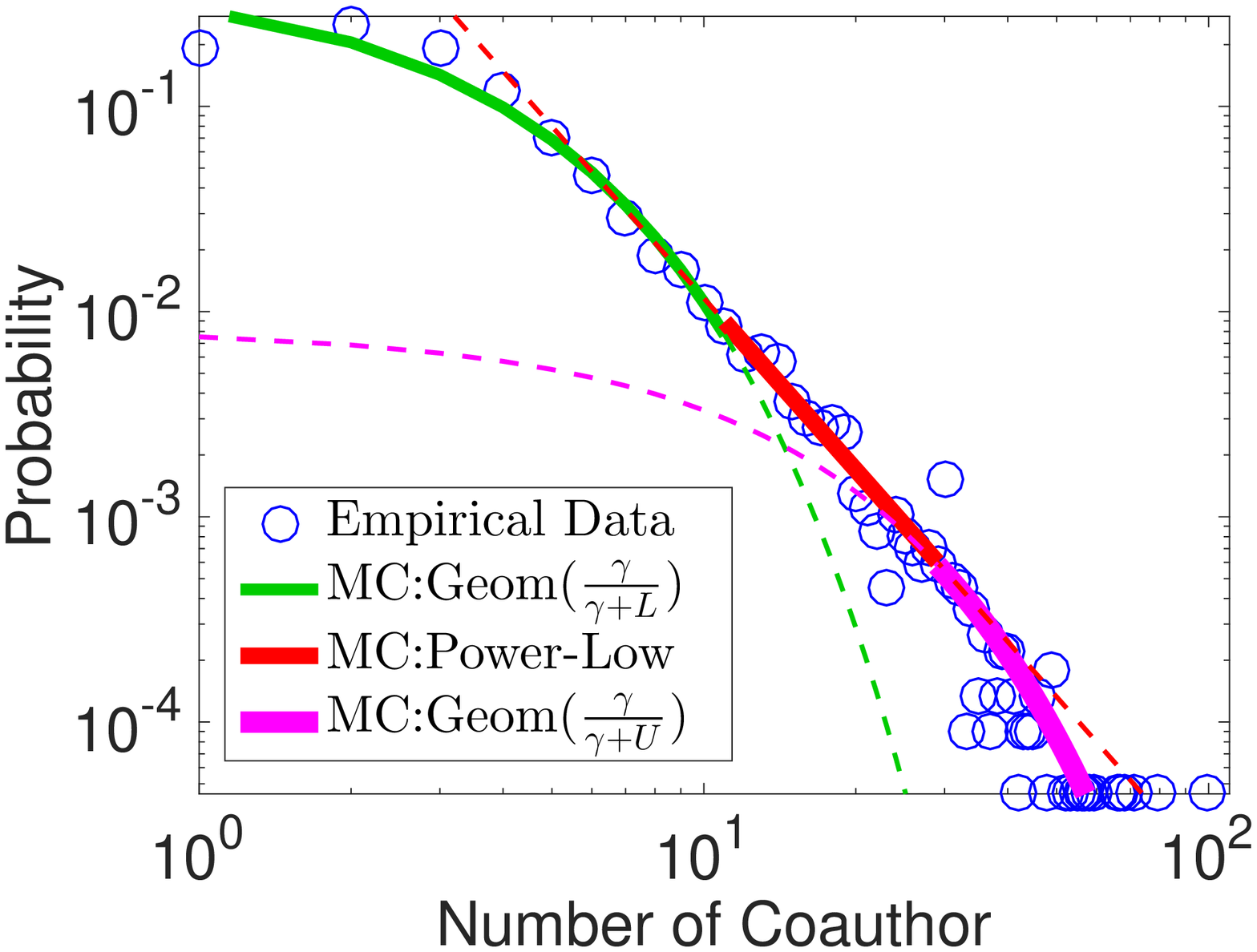}
  }
  \subfigure[Facebook]{
    \includegraphics[width=\figurewidthD]{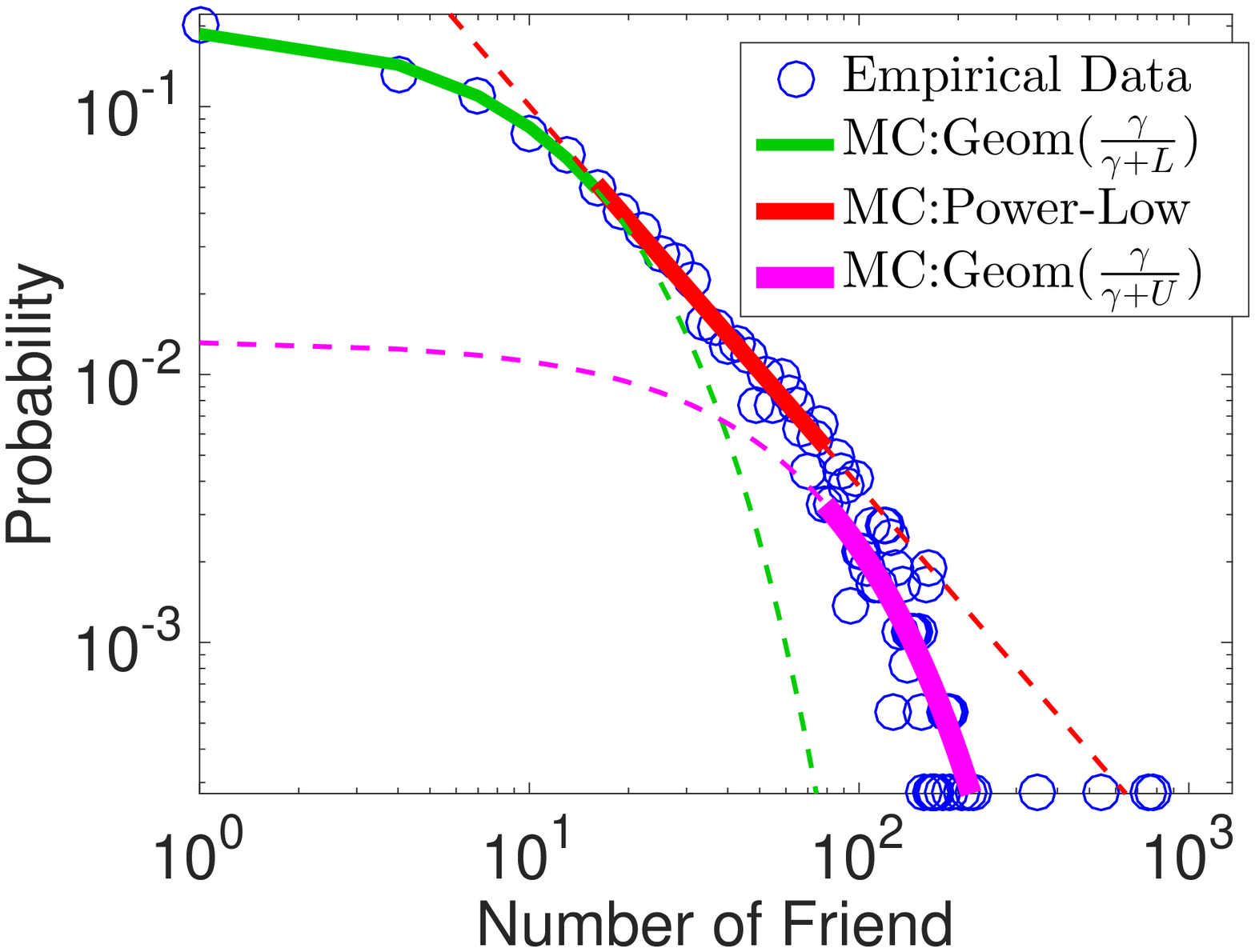}
  }
  \subfigure[Twitter]{
    \includegraphics[width=\figurewidthD]{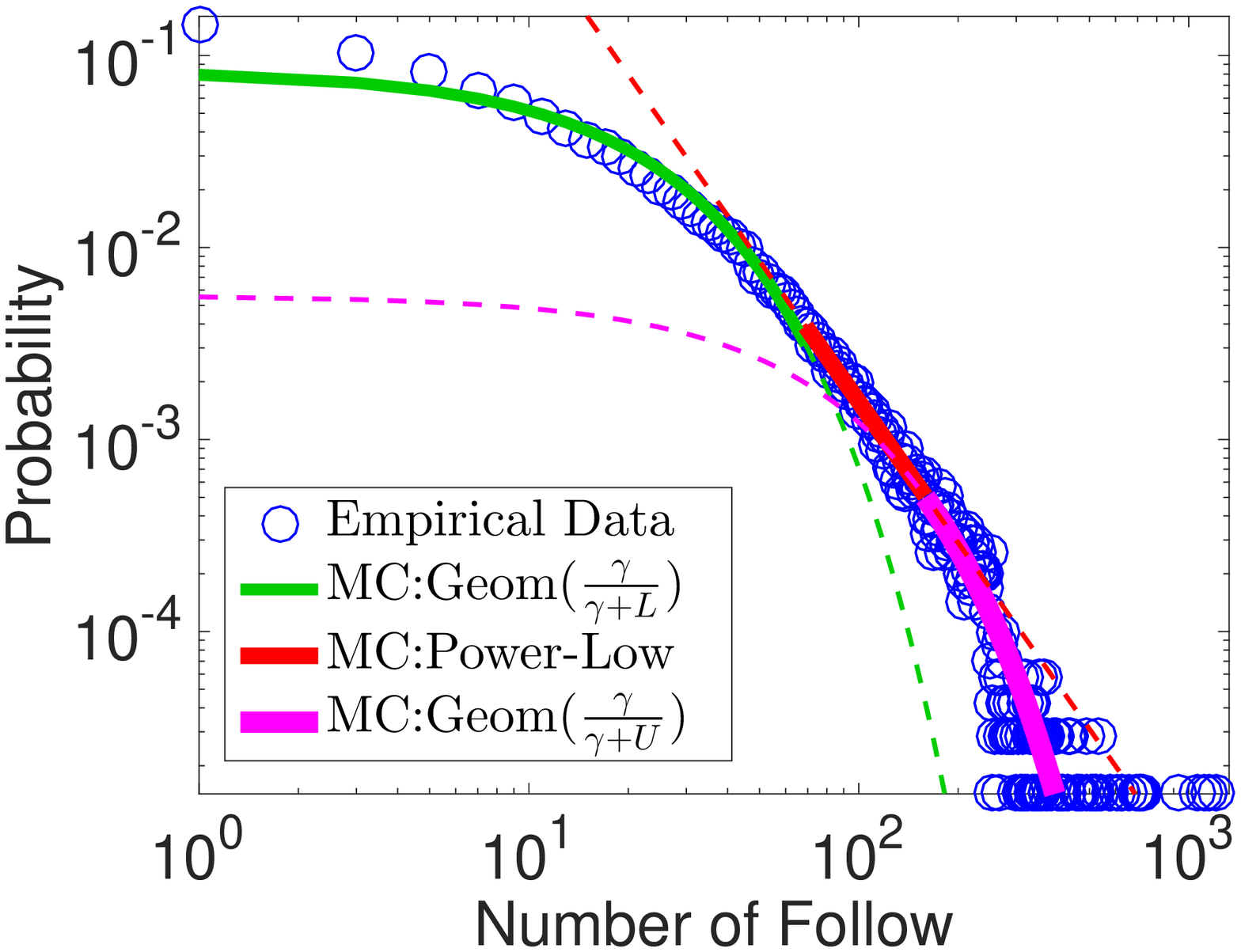}
  }
  \caption{Probability $P(x)$ versus node degree $x$ on a double logarithmic scale in social network datasets. The node degree represents the number of coauthors, friends, or followers for the three datasets, respectively.}
  \label{fig:social}
\end{figure}

\begin{table}[htb]
  \centering
  {\small
    \begin{tabular}{|c|c|c|c|c|c|}
    \hline
      Dataset & $\mathcal{L}$ & $\mathcal{U}$ & exponent & RMSE: ours & RMSE: pl \\\hline
      DBLP & $15$ & $87$ & $-2.51$ & $1.17 \times 10^{-4}$ & $0.01$  \\\hline
      Networking & $11$ & $29$ & $-2.78$ & $2.34 \times 10^{-3}$ & $0.08$  \\\hline
      Facebook & $16$ & $79$ & $-1.42$ & $2.52 \times 10^{-4}$ & $0.01$ \\\hline
      Twitter & $69$ & $159$ & $-2.44$ & $3.18 \times 10^{-4}$ & $0.22$ \\\hline
    \end{tabular}
  }
  \vspace{1pt}
  \caption{Fitting parameters and errors in social networks datasets. $RMSE: ours$ is the fitting error of our model and $RMSE: pl$ is the fitting error by using power-law only.}
  \label{tab:social-param}
\end{table}

Fig.~\ref{fig:social} shows the distributions of coauthors for DBLP Computer Science publications, friends in Facebook datasets, and followers in Twitter datasets. Fig.~\ref{fig:social}(b) further shows the coauthor distribution in a subset of publications about Computer Networks in DBLP dataset. The fitting parameters and errors are shown in Table~\ref{tab:social-param}. One can see that the new MC model reduces the fitting errors by 97\% - 99\%.

\para{Utility: }
In social networks, the distribution of the numbers of friends or coauthors that each people has is an important indicator of the health of the networks. Typically, the more evenly distributed the network is (i.e., the more number of friends each people has), the healthier (i.e., more desirable) it is. In the MC model, this corresponds to a smaller $L$, as the exponent in the power-law region $\gamma$ is proportional to $L$, hence a smaller $L$ implies a flatter slope, which means a flatter distribution. However, as proven in the MC model, a smaller $L$ also means a smaller growth rate of the network. Hence, there is a tradeoff in obtaining a healthier distribution and a faster growth of the network by changing $L$. In practice, an administrator of a social network can control this $L$ in various ways. For example, the administrator can introduce subsidies or promotions to lure the recruitment of new members, hence increasing $L$, or introduce fees for joining the network, hence reducing $L$, so as to obtain a desired tradeoff level.


\subsection{Vehicular Networks}

\begin{table}[htb]
  \centering
  {\small
    \begin{tabular}{|c|c|c|c|}
    \hline
      Dataset & Date & Duration & \# of vehicles  \\\hline
      Rome Taxi~\cite{rome-taxi} & Feb 01, 2014 & 30 days & 316 \\\hline
      Beijing Taxi~\cite{yuan:gis10:beijing-taxi,yuan:kdd11:beijing-taxi} & Feb. 02, 2008 & 7 days &  10,336 \\\hline
      San Francisco Taxi~\cite{sf-taxi} & May 17, 2008 & 24 days & 536 \\\hline
    \end{tabular}
  }
  \vspace{1pt}
  \caption{Vehicular Datasets.}
  \label{tab:v2v-data}
\end{table}

Table~\ref{tab:v2v-data} shows three vehicular networks datasets \cite{rome-taxi,sf-taxi,yuan:gis10:beijing-taxi,yuan:kdd11:beijing-taxi}, which are used in our verification and analysis.  

\para{Physical Meaning: }
In vehicular networks, vehicles are considered as nodes. When a vehicle $v1$ is within the communication range $r$ with another vehicle $v2$, a link is built between $v1$ and $v2$ and the contact counts (i.e., node degrees) of both vehicles increase by $1$. A vehicle's degree is related to the area it locates. For those vehicles located in a quiet zone, there are only a few vehicles around, so their node degrees are low (i.e., degrees $< L$). These vehicles move freely and have an equal probability to meet each other. On the other hand, a vehicle will have a higher degree if it is closer to the crowded areas (e.g., downtowns or hot scenic points). When those vehicles get closer to such areas, they will meet more vehicles with higher probabilities. Vehicles located in such areas may be able to communicate with most vehicles in the areas and therefore can link to these vehicles with an equal probability.

\para{Empirical Data and Fitting Results: }

\begin{figure}[htbp!]
  \centering
  \subfigure[Rome]{
    \includegraphics[width=\figurewidthD]{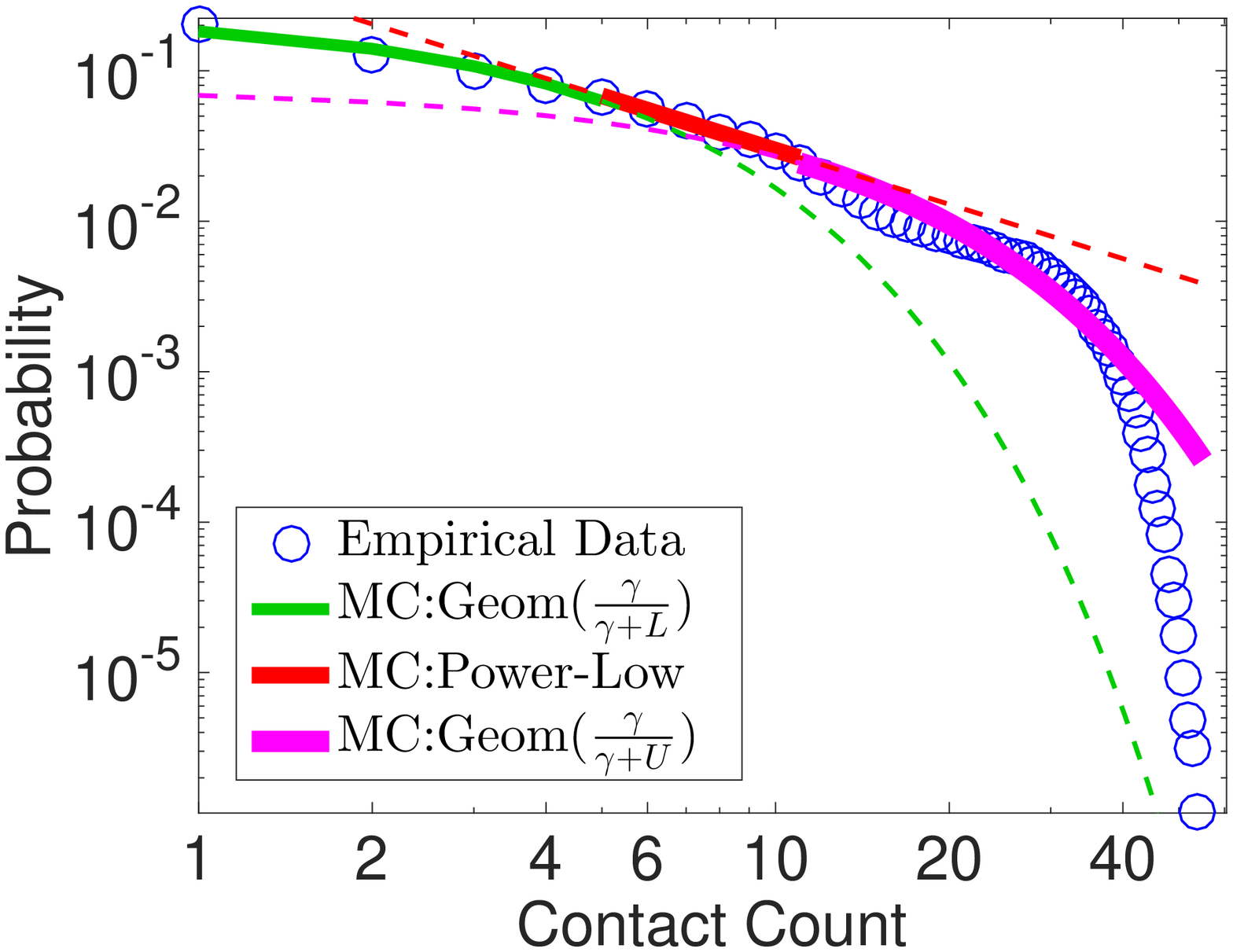}
  }
  \subfigure[Beijing]{
    \includegraphics[width=\figurewidthD]{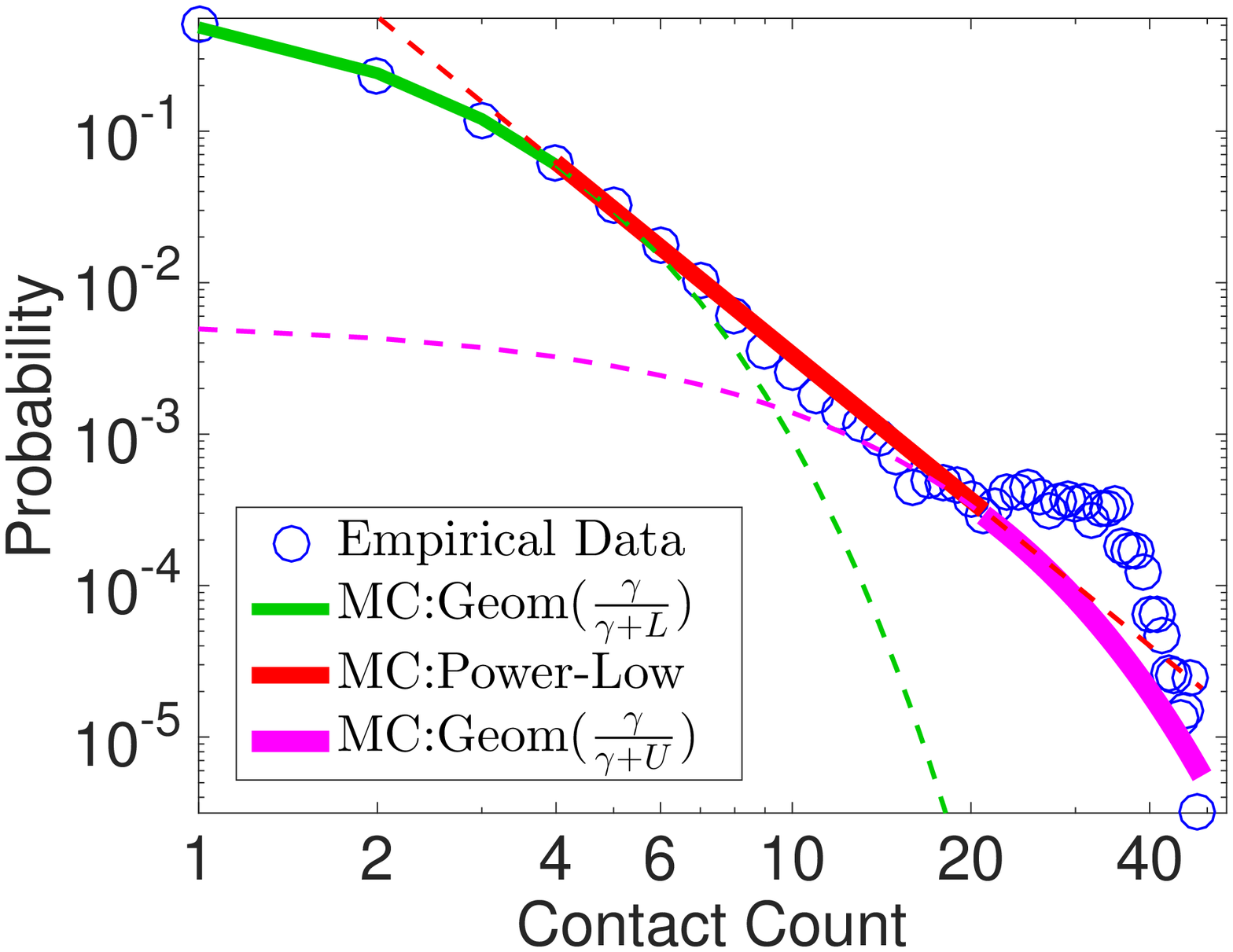}
  }
  \subfigure[San Francisco]{
    \includegraphics[width=\figurewidthD]{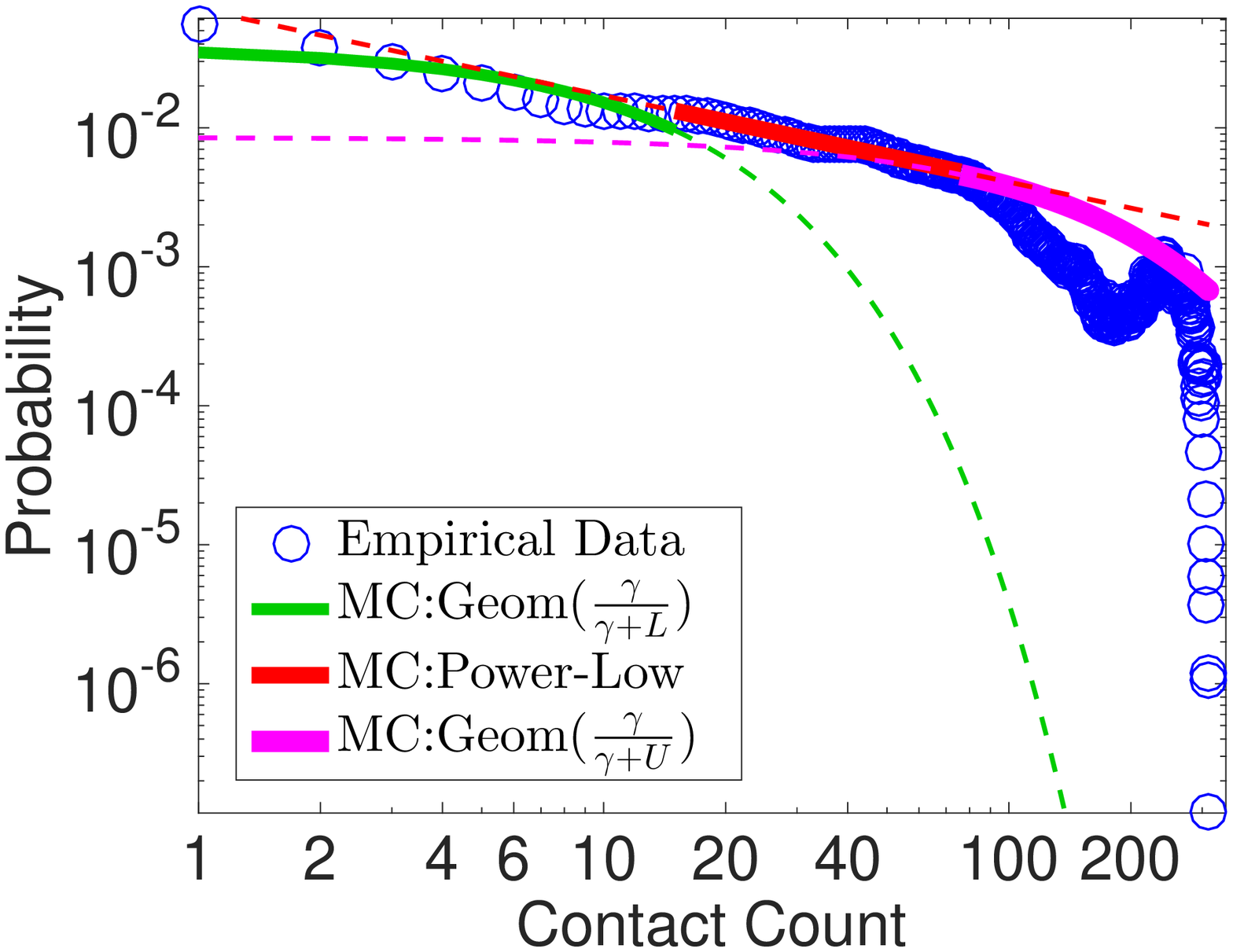}
  }
  \caption{Probability $P(x)$ versus the node degree $x$ on a double logarithmic scale in vehicular networks datasets. The node degree represents the contact counts.}
  \label{fig:v2v}
\end{figure}

\begin{table}[htb]
  \centering
  {\small
    \begin{tabular}{|c|c|c|c|c|c|}
    \hline
      Dataset & $\mathcal{L}$ & $\mathcal{U}$ & exponent & RMSE: ours & RMSE: pl \\\hline
      Rome & $5$ & $11$ & $-1.20$ & $1.7 \times 10^{-3}$ & $0.01$  \\\hline
      Beijing & $4$ & $21$ & $-3.19$ & $9.75 \times 10^{-4}$ & $0.10$ \\\hline
      San Francisco & $15$ & $76$ & $-0.62$ & $9.62 \times 10^{-4}$ & $1.67 \times 10^{-3}$  \\\hline
    \end{tabular}
  }
  \vspace{1pt}
  \caption{Fitting parameters and errors in vehicular networks datasets. $RMSE: ours$ is the fitting error of our model and $RMSE: pl$ is the fitting error by using power-law only.}
  \label{tab:v2v-param}
\end{table}

Fig.~\ref{fig:v2v} shows the distributions of contact counts for Rome, San Francisco, and Beijing taxi datasets. The fitting parameters and errors are shown in Table~\ref{tab:v2v-param}. One can see that the proposed MC model reduces the fitting errors by 42\% - 99\%. One can also observe a similar trend that all the data include three phases. However, the fitting is not as good as that in citation and social networks. This is because, in vehicular networks, vehicles not only {\em build} links with other vehicles (when entering the communication ranges), but also {\em break} some links (when leaving the communication ranges).

\para{Utility: }
The MC model can be used to design a better routing scheme for Delay-Tolerant Networking (DTN), such as vehicular networks. With the intermittent connections in DTN, one of the main challenges is how to select nodes to forward data in order to improve the reachability and throughput. Existing routing schemes rely on exchanging information when two nodes meet, and predicting if the node will get closer to the destination node in the future. These schemes however have two main disadvantages. First, the current prediction is usually based on the exponential or power-law model, which has been shown to have larger prediction errors above. Second, when two nodes meet, they need to exchange the complete or summarized history, which can occupy lots of bandwidths, especially when the intra-connection time is short. With the new MC model, one can calculate the probability directly according to the derived close-form formula, in which vehicles only need to exchange the simple information of their current node degrees. 

\section{Conclusions} 
\label{sec:conclusion}

Based on a new Markov chain (MC) model of a randomly growing network with various regimes of state dynamics characterizing different physical properties of complex networks, this paper establishes a unified framework of several classical complex networks, including Poisson, exponential, and power-law networks.
Significantly, this framework is the first mechanism to investigate the formation mechanism of the trichotomy of observed node-degree density functions from empirical data in many real networks, which has not been addressed in the existing literature. The proposed MC model is capable of offering closed-form expressions of node-degree distributions for all the studied cases. Both simulation and experimental results demonstrate a good match of the proposed model with real datasets, showing its superiority over the classical network models, particularly the power-law network model.

\appendix
\subsection{Residential-time distribution of a node in the basic MC model with external links}
\label{AppendixA:residentialTimeByExtOnly}

In the analysis of the node-degree distribution ${p_k}$, it is very important to know how long (the residential-time $T$) the specified node *
has been existing in the network. Denote the probability density function of the residential-time by ${f_T}\left( t \right)$. At the observation time $\mathcal{T}$, the node-degree distribution ${p_k}$ can be written as follows:
\begin{equation} \label{eqn:node-deg}
{p_k} = {\mathds{E}_{\mathcal{T}}}\left[ {{p_k}\left( t\right)} \right] = \int_0^{\mathcal{T}}
{{p_k}\left( t\right){f_T}\left( t \right)dt}.
\end{equation}

An illustration of the residential-time $T$ of a node is shown by Fig. \ref{fig:residential-time}.

\begin{figure}[htpb!]
  \centering
  \includegraphics[width=0.5\textwidth]{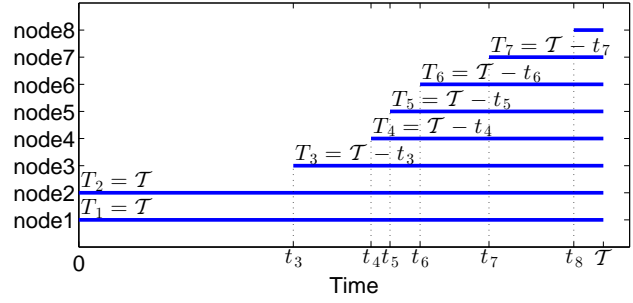}
  \caption{Arrival time of node $i$ is denoted by $t_i$, and residential-time of node $i$ is denoted by $T_i$, which is the same as $\mathcal{T} - t_i$, where $\mathcal{T}$ is the system starting time since the appearance of the first two nodes.}
  \label{fig:residential-time}
\end{figure}

This residential-time distribution ${f_T}\left( t \right)$ of each node in the network, in turn, is related to how the network size $N(t)$ is evolving over time $t$. This evolution of $N(t)$ (or specifically, the differential-difference equation on $p_n(t)$, or the marginal distribution of $N(t)$) can be obtained by summing equation (\ref{eqn:DE-on-2Dpmf}) over $k$, as follows:
\begin{equation} \label{eqn:DE-on-nodesize}
\frac{d}{{dt}}{p_n}(t) = \lambda {S_{n - 1}}{p_{n - 1}}(t)
- \lambda {S_n}{p_n}(t).
\end{equation}
Clearly, equation (\ref{eqn:DE-on-nodesize}) is an inhomogeneous birth process.

Note that the difference between ${S_n}$ and ${S_{n-1}}$ is given by
\[
{S_n} = \sum\limits_{i = 1\hfill\atop
i \ne i'\hfill}^{n - 1} {{{\hat k}_i}}  + \widehat {{k_{i'}} + 1}
+ {\hat k_n} = {S_{n - 1}} + \left( {\widehat {{k_{i'}} + 1} -
\widehat {{k_{i'}}}} \right) + {\hat k_n}
\]
It can be verified that the difference ${S_n} - {S_{n - 1}}$ is
given by $\left( {\widehat {{k_{i'}} +1} - \widehat {{k_{i'}}}}
\right) + {\hat k_n}$. It means that when a new node comes to the
network, it will create a new link and increase the rate of birth in
the process by $\lambda \left( {L +{\varepsilon _{i'}}} \right)$,
where ${\varepsilon _{i'}}$ depends on the existing degree
${k_{i'}}$ of node $i'$. Specifically,
\[
{\varepsilon_{i'}} =
\begin{cases}
0 & {{\text{if }}{k_{i'}} < \mathcal{L}} \\
\mathcal{L} + 1 - L & {{\text{if }}{k_{i'}} = \mathcal{L}} \\
1 & {{\text{if }}\mathcal{L} < {k_{i'}} < \mathcal{U}} \\
U - \mathcal{U} & {{\text{if }}{k_{i'}} = \mathcal{U}} \\
0 & {{\text{if }}{k_{i'}} > \mathcal{U}}
\end{cases}.
\]
It is remarked that the following $\tilde{\varepsilon}_{i'}$ will give the same effect as the above $\varepsilon_{i'}$ (even when $\mathcal{L} \neq L$ or $\mathcal{U} \neq U$) in analyzing the dynamics on $p_n(t)$:
\[
{\tilde{\varepsilon}_{i'}} =
\begin{cases}
0 & {{\text{if }}{k_{i'}} < L} \\
1 & {{\text{if }} L \le {k_{i'}} < U} \\
0 & {{\text{if }}{k_{i'}} \ge U}
\end{cases}.
\]
Specifically, this sum of $\tilde{\varepsilon}_{i'}$ (i.e., $\sum\limits_{i' \text{ is incident by a new node}} {\tilde{\varepsilon}_{i'}}$) gives an upper, but accurate enough, estimate of the sum $\sum\limits_{i' \text{ is incident by a new node}} {\varepsilon_{i'}}$. 

Since this ${\varepsilon _{i'}}$ depends on the current network
size $n$, it can be viewed as a realization of a random variable
${\varepsilon ^n}$, with ${\varepsilon ^0} = 0$ and
${\varepsilon ^1} = 0$, because no other node $i'$ appears in these cases. Hence, expressing $S_n$ and $S_{n-1}$ by $\varepsilon_{i'}$, and replacing the sum of $\varepsilon_{i'}$ by the sum of $\tilde{\varepsilon}_{i'}$, the node dynamics (\ref{eqn:DE-on-nodesize}) satisfy
\[
\frac{d}{{dt}}{p_n}(t) = \lambda \left[ {L\left( {n - 1} \right)
+ \sum\limits_{m = 1}^{n - 1} {{\varepsilon ^{m - 1}}} }
\right]{p_{n - 1}}(t) - \lambda \left[ {Ln + \sum\limits_{m = 1}^n
{{\varepsilon ^{m - 1}}} }\right]{p_n}(t).
\]

Let the final network size at time $\mathcal{T}$ be $N_\mathcal{T}$ and consider several important
cases and their network-size dynamics over time, as follows:

{\it Case} 1. When $\mathcal{L}=1$ and $\mathcal{U}=\infty$, the
node dynamics satisfy
\begin{eqnarray}
\frac{d}{{dt}}{p_n}(t)
&=& \lambda \left[ {\left( {n - 1} \right)
 + \left( {n - 3} \right)} \right]{p_{n - 1}}(t)
 - \lambda \left[ {n + \left( {n - 2} \right)} \right]{p_n}(t)\nonumber\\
&=& 2\lambda \left( {n - 2} \right){p_{n - 1}}(t)
 - 2\lambda \left( {n - 1} \right){p_n}(t).\nonumber
\end{eqnarray}
In this case, the density function ${f_T}\left( t \right)$ of the
residential-time of the specified node * is 
a (scaled) exponential distribution with parameter $2\lambda$, specifically with
probability density function
${f_T}\left( t \right) = \frac{2\lambda {e^{ - 2\lambda t}}}{1 - e^{ - 2\lambda \mathcal{T}}}$.

{\it Case} 2. When $U$ is large (comparable to $N$), the network
size dynamics satisfy
\begin{eqnarray}
\frac{d}{{dt}}{p_n}(t)
&\sim& \lambda \left[ {L\left( {n - 1} \right)
 + \left( {n - 1 - c} \right)} \right]{p_{n - 1}}(t)
 - \lambda \left[ {Ln + \left( {n - c} \right)} \right]{p_n}(t)\nonumber\\
&=& \lambda \left( {\left( {L + 1} \right)
 \left( {n - 1} \right) - \left( {c} \right)} \right){p_{n - 1}}(t)
 - \lambda \left( {\left( {L + 1} \right)n - \left( {c} \right)}
 \right){p_n}(t)\nonumber
\end{eqnarray}
for some $L + 1 \leq c \leq n - 1$.

According to the preferential attachment mechanism, a new node has a higher probability to connect to an existing one with a larger
degree, instead of those with degrees less than or equal to $L$. Hence, $c$ is small compared to $n$, i.e., around $L + 1$, instead around $n - 1$.

In this case, the density function ${f_T}\left( t \right)$ of the
residential-time of the specified node * is given by a (scaled) exponential distribution with parameter close to $\lambda \left(L + 1
\right)$, specifically with probability density function ${f_T}\left( t\right)
= \frac{\lambda \left( {L + 1} \right){e^{ - \lambda \left( {L+ 1} \right)t}}}{1 - e^{ - \lambda \left( {L+ 1} \right)\mathcal{T}}}$.

{\it Case} 3. This is the most practical scenario. When $U \ll N$,
the network size dynamics satisfy
\[
\frac{d}{{dt}}{p_n}(t) \sim \lambda
\left[ {L\left( {n - 1} \right) + c'} \right]{p_{n - 1}}(t)
- \lambda \left[ {Ln + c'} \right]{p_n}(t)
\]
where $c'$ is small, compared to $n$, and satisfies $ 0 \leq c' \leq
n - 1 - \left(L + 1\right)$.

In this case, the density function ${f_T}\left( t \right)$ of the
residential-time of the specified node * is
a (scaled) exponential distribution with parameter close to $\lambda L$, specifically with the
probability density function
${f_T}\left( t \right) = \frac{\lambda L{e^{ - \lambda Lt}}}{1 - \lambda L{e^{ - \lambda L \mathcal{T}}}}$.

It is noted that in all above cases, the denominator of the density function ${f_T}\left( t \right)$ would only result in a difference in the normalization constant in the calculation of $p_k$ in equation (\ref{eqn:node-deg}).

%

\end{document}